\newcommand{\E}{\mathbb{E}}
\newcommand{\C}{\mathbb{C}}
\newcommand{\V}{\mathbb{V}}
\newcommand{\T}{\ensuremath{\mathsf{T}}}
\newif\iffinal %
\pgfplotsset{compat = 1.3}
\definecolor{AaltoYellow}{RGB}{255, 205, 0}
\definecolor{AaltoRed}{RGB}{239, 51, 64}
\definecolor{AaltoBlue}{RGB}{0, 94, 184}
\definecolor{AaltoPurple}{RGB}{117, 59, 189}
\definecolor{AaltoGreen}{RGB}{120, 190, 32}
\definecolor{AaltoPurple2}{RGB}{192, 26, 162}
\begin{document}

\title{Probabilistic Solutions To Ordinary Differential Equations\\ As Non-Linear Bayesian Filtering: A New Perspective%
}

\author{Filip Tronarp         \and
        Hans Kersting \and
        Simo S\"arkk\"a \and
	Philipp Hennig%
}

\institute{Filip Tronarp   \at Aalto University \\
              02150 Espoo, Finland  \\
              \email{filip.tronarp@aalto.fi}           %
           \and
           Hans Kersting \at University of T\"ubingen \\
           and Max Planck Institute for Intelligent Systems \\
           72076 T\"ubingen, Germany\\
	  \email{hkersting@tuebingen.mpg.de}
	  \and
   	  Simo S\"arkk\"a \at Aalto University \\
      02150 Espoo, Finland  \\
              \email{simo.sarkka@aalto.fi}
          \and
          Philipp Hennig \at University of T\"ubingen \\
           and Max Planck Institute for Intelligent Systems \\
           72076 T\"ubingen, Germany\\
  	\email{ph@tue.mpg.de}
}

\date{Received: date / Accepted: date}

\maketitle

\begin{abstract}
We formulate probabilistic numerical approximations to solutions of ordinary differential equations (ODEs) as problems in Gaussian process (GP) regression with non-linear measurement functions. This is achieved by defining the measurement sequence to consist of the observations of the difference between the derivative of the GP and the vector field evaluated at the GP---which are all identically zero at the solution of the ODE. When the GP has a state-space representation, the problem can be reduced to a non-linear Bayesian filtering problem and all widely-used approximations to the Bayesian filtering and smoothing problems become applicable. Furthermore, all previous GP-based ODE solvers that are formulated in terms of generating synthetic measurements of the gradient field come out as specific approximations. Based on the non-linear Bayesian filtering problem posed in this paper, we develop novel Gaussian solvers for which we establish favourable stability properties. Additionally, non-Gaussian approximations to the filtering problem are derived by the particle filter approach. The resulting solvers are compared with other probabilistic solvers in illustrative experiments.
\keywords{Probabilistic Numerics \and Initial Value Problems \and Non-linear Bayesian Filtering}
\end{abstract}

\section{Introduction}
\label{sec:1}
We consider an initial value problem (IVP), that is, an ordinary differential equation (ODE)
\begin{equation}\label{eq:ode}
  \dot{y}(t) = f\left(y(t),t\right),\ \forall t \in [0,T], \quad y(0) = y_0 \in \mathbb R^d,
\end{equation}
with initial value $y_0$ and vector field $f:\mathbb R^d \times \mathbb R_+ \to \mathbb R^d$.
Numerical solvers for IVPs approximate $y:[0,T] \to \mathbb R^d$ and are of paramount importance in almost all areas of science and engineering.
Extensive knowledge about this topic has been accumulated in numerical analysis literature, for example, in \citet{hairer87:_solvin_ordin_differ_equat_i}, \citet{DeuflhardBornemann2002}, and \citet{Butcher2008}.
However, until recently, a probabilistic quantification of the inevitable uncertainty--for all but the most trivial ODEs--from the numerical error over their outputs has been omitted.

Moreover, ODEs are often part of a pipeline surrounded by preceding and subsequent computations, which are themselves corrupted by uncertainty from model misspecification, measurement noise, approximate inference or, again, numerical inaccuracy \citep{KennedyOHagan02}.
In particular, ODEs are often integrated using estimates of its parameters rather than the correct ones.
See \citet{ZhangJadbabaie18} and \citet{ChenDuvenaud18} for recent examples of such computational chains involving ODEs.
The field of \emph{probabilistic numerics} (PN) \citep{HenOsbGirRSPA2015} seeks to overcome this ignorance of numerical uncertainty and the resulting overconfidence by providing \emph{probabilistic numerical methods}. These solvers quantify numerical errors probabilistically and add them to uncertainty from other sources.
Thereby, they can take decisions in a more uncertainty-aware and uncertainty-robust manner \citep{paul_alternating_2018}.

In the case of ODEs, one family of probabilistic solvers (\citealt{skilling1991bayesian}, \citealt{HennigAISTATS2014}, and \citealt{schober2014nips}) first treated IVPs as Gaussian process (GP) regression \cite[Chapter 2]{RasmussenWilliams}.
Then, \citet{Kersting2016UAI} and \citet{schober2018} sped up these methods by regarding them as stochastic filtering problems \citep{oksendal2003stochastic}.
These completely deterministic filtering methods converge to the true solution with high polynomial rates \citep{Kersting2018}. In their methods data for the 'Bayesian update' is constructed by evaluating the vector field $f$ under the GP predictive mean of $y(t)$ and linked to the model with a Gaussian likelihood \cite[Section 2.3]{schober2018}. See also \citet[Section 1.2]{Wang18} for alternative likelihood models. 
This conception of data implies that it is the output of the adopted inference procedure. More specifically, one can show that with everything else being equal, two different priors may end up operating on different measurement sequences. Such a coupling between prior and measurements is not standard in statistical problem formulations, as acknowledged in \citet[Section 2.2]{schober2018}. It makes the model and the subsequent inference difficult to interpret. For example, it is not clear how to do Bayesian model comparisons \citep[Section 2.4]{Cockayne2017BayesianPN} when two different priors necessarily operate on two different data sets for the same inference task. 

Instead of formulating the solution of Eq. \eqref{eq:ode} as a Bayesian GP regression problem, another line of work on probabilistic solvers for ODEs comprising the methods from \citealt{o.13:_bayes_uncer_quant_differ_equat}, \citealt{conrad_probability_2017}, \citealt{teymur2016probabilistic}, \citealt{Lie17}, \citealt{AbdulleGaregnani17}, and \citealt{Teymur2018a} aims to represent the uncertainty arising from the discretization error by a set of samples.
While multiplying the computational cost of classical solvers with the amount of samples, these methods can capture arbitrary (non-Gaussian) distributions over the solutions and can reduce over-confidence in inverse problems for ODEs---as demonstrated in \citet[Section 3.2.]{conrad_probability_2017}, \citet[Section 7]{AbdulleGaregnani17}, and \citet{Teymur2018a}.
These solvers can be considered as more expensive, but statistically more expressive.
This paper contributes a particle filter as a sampling-based filtering method at the intersection of both lines of work, providing a previously missing link.

The contributions of this paper are the following:
Firstly, we circumvent the issue of generating synthetic data, by recasting solutions of ODEs in terms of non-linear Bayesian filtering problems in a well defined state-space model. For any fixed-time discretisation, the measurement sequence and likelihood are also fixed. That is, we avoid the coupling of prior and measurement sequence, that is for example present in \citet{schober2018}.
This enables application of all Bayesian filtering and smoothing techniques to ODEs as described, for example, in \citet{Sarkka2013}.
Secondly, we show how the application of certain inference techniques recovers the previous filtering-based methods.
Thirdly, we discuss novel algorithms giving rise to both Gaussian and non-Gaussian solvers.

Fourthly, we establish a stability result for the novel Gaussian solvers.
Fifthly, we discuss practical methods for uncertainty calibration, and in the case of Gaussian solvers, we give explicit expressions. 
Finally, we present some illustrative experiments demonstrating that these methods are practically useful both for fast inference of the unique solution of an ODE as well as for representing multi-modal distributions of trajectories.

\section{Bayesian Inference for Initial Value Problems}\label{sec:bayesian_ivps}
Formulating an approximation of the solution to Eq. \eqref{eq:ode} at a discrete set of points $\{t_n\}_{n=0}^N$ as a problem of Bayesian inference requires, as always, three things: a prior measure, data, and a likelihood, which define a posterior measure through Bayes' rule.

We start with examining a continuous-time formulation in Section \ref{subsec:CT}, where Bayesian conditioning should, in the ideal case, give a Dirac measure at the true solution of Eq. \eqref{eq:ode} as the posterior. This has two issues: (1) conditioning on the entire gradient field is not feasible on a computer in finite time and (2) the conditioning operation itself is intractable. Issue (1) is present in classical Bayesian quadrature \citep{Briol2015probint} as well. Limited computational resources imply that only a finite number of evaluations of the integrand can be used. Issue (2) turns, what is linear GP regression in Bayesian quadrature, into non-linear GP regression. While this is unfortunate, it appears reasonable that something should be lost as the inference problem is more complex.

With this in mind, a discrete-time non-linear Bayesian filtering problem is posed in Section \ref{subsec:DT}, which targets the solution of Eq. \eqref{eq:ode} at a discrete set of points.

\subsection{A Continuous-Time Model}\label{subsec:CT}
Like previous works mentioned in Section \ref{sec:1}, we consider priors given by a GP
\begin{equation*}
X(t) \sim \mathrm{GP}\left(\bar{x},k\right),
\end{equation*}
where $\bar{x}(t)$ is the mean function and $k(t,t')$ is the covariance function. The vector $X(t)$ is given by
\begin{equation}
X(t) = \begin{bmatrix}\big(X^{(1)}(t)\big)^\T, \dots,\big(X^{(q+1)}(t)\big)^\T\end{bmatrix}^\T,
\end{equation}
where $X^{(1)}(t)$ and $X^{(2)}(t)$ model $y(t)$ and $\dot{y}(t)$, respectively. The remaining $q-1$ sub-vectors in $X(t)$ can be used to model higher order derivatives of $y(t)$ as done by \citet{schober2018} and \citet{Kersting2016UAI}. We define such priors by a stochastic differential equation \citep{oksendal2003stochastic}, that is,
\begin{subequations}\label{eq:ode_prior}
\begin{align}
X(0) &\sim \mathcal{N}\big(\mu^-(0),\Sigma^-(0) \big), \\
\dif X(t) &= \big[F X(t) + u\big] \dif t + L \dif B(t),
\end{align}
\end{subequations}
where $F$ is a state transition matrix, $u$ is a forcing term, $L$ is a diffusion matrix, and $B(t)$ is a vector of standard Wiener processes.

Note that for $X^{(2)}(t)$ to be the derivative of $X^{(1)}$, $F$, $u$, and $L$ are such that
\begin{equation}\label{eq:model_assumption1}
\dif X^{(1)}(t) = X^{(2)}(t) \dif t.
\end{equation}
The use of an SDE---instead of a generic GP prior---is computationally advantageous because it restricts the priors to Markov processes due to \citet[Theorem 7.1.2]{oksendal2003stochastic}.
This allows for inference with linear time-complexity in $N$, while the time-complexity is $N^3$ for GP priors in general \citep{hartikainen2010kalman}.

Inference requires data, and an associated likelihood. Previous authors, such as \citet{schober2018} and \citet{o.13:_bayes_uncer_quant_differ_equat}, put forth the view of the prior measure defining an \emph{inference agent}, which cycles through extrapolating, generating measurements of the vector field, and updating. Here we argue that there is no need for generating measurements, since re-writing Eq. \eqref{eq:ode} yields the requirement
\begin{equation}
\dot{y}(t) - f(y(t),t) = 0.
\end{equation}
This suggests that a measurement relating the prior defined by Eq. \eqref{eq:ode_prior} to the solution of Eq. \eqref{eq:ode} ought to be defined as
\begin{equation}\label{eq:ct_measurement}
Z(t) = X^{(2)}(t) - f(X^{(1)}(t),t).
\end{equation}
While conditioning the process $X(t)$ on the event $Z(t) = 0$ for all $t \in [0,T]$ can be
formalised using the concept of \emph{disintegration} \citep{Cockayne2017BayesianPN}, 
it is intractable in general and thus impractical for computer implementation. Therefore, we formulate a discrete-time
inference problem in the sequel.
\subsection{A Discrete-Time Model}\label{subsec:DT}
In order to make the inference problem tractable, we only attempt to condition the process $X(t)$ on $Z(t) = z(t) \triangleq 0$ at a set of discrete time-points, $\{t_n\}_{n=0}^N$.
We consider a uniform grid, $t_{n+1} = t_n + h$, though extending the present methods to non-uniform grids can be done as described in \citet{schober2018}. In the sequel, we will denote a function evaluated at $t_n$ by subscript $n$, for example $z_n = z(t_n)$. From Eq. \eqref{eq:ode_prior} an equivalent discrete-time system can be obtained \citep[Chapter 3.7.3]{grewal2001kalman} \footnote{Here `equivalent' is used in the sense that the probability distribution of the continuous-time process evaluated on the grid coincides with the probability distribution of the discrete-time process \citep[Page 17]{sarkka2006thesis}.}. The inference problem becomes
\begin{subequations}\label{eq:inference_problem}
\begin{align}
X_0 &\sim \mathcal{N}(\mu^F_0,\Sigma^F_0), \\
X_{n+1} \mid X_n &\sim \mathcal{N}\big(A(h)X_n+\xi(h), Q(h)\big), \\
Z_n \mid X_n &\sim \mathcal{N}\big(\dot{C}X_n - f(C X_n,t_n),R\big), \label{eq:dt_measurement}\\
z_n &\triangleq 0,\quad n = 1,\dots,N, \label{eq:z=0}
\end{align}
\end{subequations}
where $z_n$ is the realisation of $Z_n$. The parameters $A(h)$, $\xi(h)$, and $Q(h)$ are given by
\begin{subequations}\label{eq:lti_disc}
\begin{align}
A(h)   &= \exp (Fh), \\
\xi(h) &= \int_0^h \exp(F(h-\tau)) u \dif \tau,\\
Q(h)   &= \int_0^h \exp(F(h-\tau)) L L^\T \exp(F^\T(h-\tau)) \dif \tau.
\end{align}
\end{subequations}
Furthermore, $C = [\mathrm{I}  \ 0 \ \ldots\ 0]$ and $\dot{C} = [ 0 \ \mathrm{I}  \ 0 \ \ldots\ 0]$. That is, $CX_n = X_n^{(1)}$ and $\dot{C}X_n =  X_n^{(2)}$. 
A measurement variance, $R$, has been added to $Z(t_n)$ for greater generality, which simplifies the construction of particle filter algorithms.
The likelihood model in Eq. \eqref{eq:dt_measurement} has previously been used in the gradient matching approach
to inverse problems to avoid explicit numerical integration of the ODE  (see, e.g.,~\citealt{calderhead2008accelerating}).

The inference problem posed in Eq. \eqref{eq:inference_problem} is a standard problem in non-linear GP regression
\citep{RasmussenWilliams}, also known as Bayesian filtering and smoothing in stochastic signal processing \citep{Sarkka2013}.
Furthermore, it reduces to Bayesian quadrature when the vector field does not depend on $y$. This is Proposition \ref{prop:BQ} below.

\begin{proposition}\label{prop:BQ}
Let $X^{(1)}_0 = 0$, $f(y(t),t) = g(t)$, $y(0) = 0$, and $R = 0$. Then the posteriors of $\{X^{(1)}_n\}_{n=1}^N$ are Bayesian quadrature approximations for
\begin{equation}\label{eq:BQ}
\int_0^{nh} g(\tau) \dif \tau,\quad n=1,\dots,N.
\end{equation}
\end{proposition}

A proof of Proposition \ref{prop:BQ} is given in Appendix \ref{appendix:proof1}.

\begin{remark}
The Bayesian quadrature method described in Proposition \ref{prop:BQ} conditions on function evaluations outside the domain of integration for $n < N$. This corresponds to the smoothing equations associated with Eq. \eqref{eq:inference_problem}. If the integral on the domain $[0,nh]$ is only conditioned on evaluations of $g$ inside the domain then the filtering estimates associated with Eq. \eqref{eq:inference_problem} are obtained.
\end{remark}

\subsection{Gaussian Filtering}\label{subsec:gaussian_filtering}
The inference problem posed in Eq. \eqref{eq:inference_problem} is a standard problem in statistical signal processing and
machine learning, and the solution is often approximated by Gaussian filters and smoothers \citep{Sarkka2013}.
Let us define $z_{1:n} = \{ z_l \}_{l=1}^n$ and the following conditional moments
\begin{subequations}
\begin{align}
\mu_n^F &\triangleq \E[X_n \mid z_{1:n}], \\
\Sigma_n^F &\triangleq \V[X_n \mid z_{1:n}], \\
\mu_n^P &\triangleq \E[X_n \mid z_{1:n-1}], \\
\Sigma_n^P &\triangleq \V[X_n \mid z_{1:n-1}],
\end{align}
\end{subequations}
where $\E[\cdotp \mid  z_{1:n}]$ and $\V[ \cdotp \mid  z_{1:n}]$ are the conditional mean and covariance operators given the measurements
$Z_{1:n} = z_{1:n}$. Additionally, $\E[ \cdotp \mid z_{1:0}] =  \E[ \cdotp ]$ and $\V[ \cdotp \mid z_{1:0}] =  \V[ \cdotp ]$ by convention. Furthermore, $\mu_n^F$ and $\Sigma_n^F$ are referred to as the filtering mean and covariance, respectively.
Similarly,  $\mu_n^P$ and $\Sigma_n^P$ are referred to as the predictive mean and covariance, respectively.
In Gaussian filtering, the following relationships hold between $\mu_n^F$ and $\Sigma_n^F$, and $\mu_{n+1}^P$ and $\Sigma_{n+1}^P$:
\begin{subequations}\label{eq:kalman_prediction}
\begin{align}
\mu_{n+1}^P &= A(h) \mu_n^F + \xi(h), \\
\Sigma_{n+1}^P &= A(h) \Sigma_n^F A^\T(h) + Q(h),
\end{align}
\end{subequations}
which are the prediction equations \citep[Eq. 6.6]{Sarkka2013}. The update equations, relating the predictive moments
$\mu_{n}^P$ and $\Sigma_{n}^P$ with the filter estimate, $\mu_{n}^F$, and its covariance $\Sigma_{n}^F$, are given by \citep[Eq. 6.7]{Sarkka2013}
\begin{subequations}\label{eq:general_gaussian_update}
\begin{align}
S_n &= \V\big[\dot{C}X_n - f(CX_n,t_n)\mid z_{1:n-1}\big] + R, \\
K_n &= \C\big[X_n,\dot{C}X_n - f(CX_n,t_n)\mid z_{1:n-1}\big] S_n^{-1}, \\
\hat{z}_n &= \E\big[\dot{C}X_n - f(CX_n,t_n)\mid z_{1:n-1}\big], \\
\mu_n^F &= \mu_n^P + K_n(z_n - \hat{z}_n), \\
\Sigma_n^F &= \Sigma_n^P - K_n S_n K_n^\T,
\end{align}
\end{subequations}
where the expectation ($\E$), covariance ($\V$) and cross-covariance ($\C$) operators are with respect to $X_n \sim \mathcal{N}(\mu_n^P,\Sigma_n^P)$. Evaluating these moments is intractable in general, though various approximation schemes exist in literature. Some standard approximation methods shall be examined below. In particular, the methods of \citet{schober2018} and \citet{Kersting2016UAI} come out as particular approximations to Eq. \eqref{eq:general_gaussian_update}.

\subsection{Taylor-Series Methods}
\label{sec:Taylor-Series_Methods}
A classical method in filtering literature to deal with non-linear measurements of the form in Eq. \eqref{eq:inference_problem} is to make a first order Taylor-series expansion, thus turning the problem into a standard update in linear filtering. However, before going through the details of this it is instructive to interpret the method of \citet{schober2018} as an even simpler Taylor-series method. This is Proposition \ref{prop:schober} below.
\begin{proposition}\label{prop:schober}
Let $R = 0$ and approximate $f(C X_n,t_n)$ by its zeroth order Taylor expansion in $X_n$ around the point $\mu_n^P$
\begin{equation}
f\big(CX_n,t_n\big) \approx f\big(C\mu_n^P,t_n\big).
\end{equation}
Then, the approximate posterior moments are given by
\begin{subequations}\label{eq:schober}
\begin{align}
S_n &\approx \dot{C} \Sigma_n^P \dot{C}^\T + R, \\
K_n &\approx \Sigma_n^P \dot{C}^\T S_n^{-1}, \\
\hat{z}_n &\approx \dot{C} \mu_n^P - f\big(C\mu_n^P,t_n\big), \\
\mu_n^F &\approx \mu_n^P + K_n(z_n - \hat{z}_n), \\
\Sigma_n^F &\approx \Sigma_n^P - K_n S_n K_n^\T,
\end{align}
\end{subequations}
which is precisely the update by \citet{schober2018}.
\end{proposition}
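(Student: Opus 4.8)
The plan is to substitute the zeroth-order approximation $f(CX_n,t_n) \approx f(C\mu_n^P,t_n)$ directly into the exact Gaussian update equations \eqref{eq:general_gaussian_update} and simplify, exploiting the fact that under this approximation the measurement function becomes affine in $X_n$. The crucial observation is that $f(C\mu_n^P,t_n)$ no longer depends on the random variable $X_n$; it is a deterministic constant. Consequently the approximated measurement $\dot{C}X_n - f(C\mu_n^P,t_n)$ is an affine function of $X_n$, and all the moments required in \eqref{eq:general_gaussian_update} under $X_n \sim \mathcal{N}(\mu_n^P,\Sigma_n^P)$ can be evaluated in closed form.

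I would carry this out term by term. First, by linearity of the expectation operator, and because the constant $f(C\mu_n^P,t_n)$ passes through unchanged, $\hat{z}_n = \E[\dot{C}X_n - f(C\mu_n^P,t_n) \mid z_{1:n-1}] = \dot{C}\mu_n^P - f(C\mu_n^P,t_n)$, using $\E[X_n \mid z_{1:n-1}] = \mu_n^P$. Second, since shifting a random vector by a constant leaves its covariance invariant, $S_n = \V[\dot{C}X_n \mid z_{1:n-1}] + R = \dot{C}\Sigma_n^P\dot{C}^\T + R$. Third, because the cross-covariance of $X_n$ with a constant vanishes and $\C$ is bilinear, $K_n = \C[X_n,\dot{C}X_n \mid z_{1:n-1}]S_n^{-1} = \Sigma_n^P\dot{C}^\T S_n^{-1}$. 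The remaining equations for $\mu_n^F$ and $\Sigma_n^F$ are structurally identical to those in \eqref{eq:general_gaussian_update} and simply inherit these approximated quantities, so no further work is needed.

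Since each of the three moment computations is an exact identity for affine transformations of a Gaussian, the only approximation being the initial replacement of $f$ by a constant, the derivation itself is routine and I do not anticipate an algebraic obstacle. The one step that requires genuine care is the final identification with the update of \citet{schober2018}: I would verify that the resulting prediction-correction scheme—with gain $\Sigma_n^P\dot{C}^\T S_n^{-1}$ and innovation $z_n - (\dot{C}\mu_n^P - f(C\mu_n^P,t_n))$ evaluated at $z_n = 0$—coincides line for line with their stated equations once notational differences and their choice $R = 0$ are accounted for. This cross-reference to the external formulation, rather than any computation internal to \eqref{eq:general_gaussian_update}, is where the real content of the claim resides.
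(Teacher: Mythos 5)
Your derivation is correct: the paper in fact states Proposition~\ref{prop:schober} without an explicit proof (note the appendices jump from the proof of Proposition~\ref{prop:BQ} to that of Proposition~\ref{prop:kersting}), evidently treating the substitution of the constant $f(C\mu_n^P,t_n)$ into Eq.~\eqref{eq:general_gaussian_update} as immediate, and your term-by-term evaluation of $\hat{z}_n$, $S_n$, and $K_n$ is exactly that intended computation. Your closing remark that the genuine content lies in matching the resulting recursion against the update of \citet{schober2018} is also well placed.
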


\paragraph{A First Order Approximation.} The approximation in Eq. \eqref{eq:schober} can be refined by using a first order approximation,
which is known as the extended Kalman filter (EKF) in signal processing literature \citep[Algorithm 5.4]{Sarkka2013}. That is,
\begin{equation}
\begin{split}
f\big(CX_n,t_n\big) &\approx f\big(C\mu_n^P,t_n\big) \\
&\quad+ J_f\big(C\mu_n^P,t_n\big)C\big(X_n - \mu_n^P\big),
\end{split}
\end{equation}
where $J_f$ is the Jacobian of $y \to f(y,t)$. The filter update is then
\begin{subequations}\label{eq:ekf}
\begin{align}
\tilde{C}_n &= \dot{C} -   J_f\big(C\mu_n^P,t_n\big)C, \\
S_n &\approx \tilde{C}_n \Sigma_n^P\tilde{C}_n^\T + R, \\
K_n &\approx \Sigma_n^P \tilde{C}_n^\T S_n^{-1}, \\
\hat{z}_n &\approx \dot{C} \mu_n^P - f\big(C\mu_n^P,t_n\big), \\
\mu_n^F &\approx \mu_n^P + K_n(z_n - \hat{z}_n), \\
\Sigma_n^F &\approx \Sigma_n^P - K_n S_n K_n^\T.
\end{align}
\end{subequations}
Hence the extended Kalman filter computes the residual, $z_n - \hat{z}_n$, in the same manner as \citet{schober2018}.
However, as the filter gain, $K_n$, now depends on evaluations of the Jacobian, the resulting probabilistic ODE solver is different in general.

While Jacobians of the vector field are seldom exploited in ODE solvers, they play a central role in Rosenbrock methods,
(\citealt{Rosenbrock1963} and \citealt{Hochbruck2009}). The Jacobian of the vector field was also recently used by \citet{Teymur2018a}
for developing a probabilistic solver.

Although the extended Kalman filter goes as far back as the 1960s \citep{jazwinski1970stochastic}, the update in Eq.
\eqref{eq:ekf} results in a probabilistic method for estimating the solution of \eqref{eq:ode} that appears to be novel. Indeed, to the best of the authors' knowledge, the only Gaussian filtering based solvers that have appeared so far are those by  \citealt{Kersting2016UAI}, \citealt{magnani2017}, and \citet{schober2018}.

\subsection{Numerical Quadrature}\label{subsec:quadrature}
Another method to approximate the quantities in Eq. \eqref{eq:general_gaussian_update} is by quadrature,
which consists of a set of nodes $\{\mathcal{X}_{n,j}\}_{j=1}^J$ with weights $\{w_{n,j}\}_{j=1}^J$ that are associated
to the distribution $\mathcal{N}(\mu_n^P,\Sigma_n^P)$. These nodes and weights can either be constructed to integrate polynomials
up to some order exactly (see, e.g., \citealt{McNameeStenger1967} and \citealt{Golub1969}), or by
Bayesian quadrature \citep{Briol2015probint}. In either case, the expectation of a function $\psi(X_n)$ is approximated by
\begin{equation}
\E[\psi(X_n)] \approx \sum_{j=1}^J w_{n,j} \psi(\mathcal{X}_{n,j}).
\end{equation}
Therefore, by appropriate choices of $\psi$ the quantities in Eq. \eqref{eq:general_gaussian_update} can be approximated. We shall refer to filters using a third degree fully symmetric rule \citep{McNameeStenger1967} as Unscented Kalman filters (UKF), which is the name that was adopted when it was first introduced to the signal processing community \citep{Julier2000}.
For a suitable cross-covariance assumption and a particular choice of quadrature, the method of \citet{Kersting2016UAI} is retrieved. This is Proposition \ref{prop:kersting}.

\begin{proposition}\label{prop:kersting}
Let $\{\mathcal{X}_{n,j}\}_{j=1}^J$ and $\{w_{n,j}\}_{j=1}^J$ be the nodes and weights, corresponding to a Bayesian quadrature rule
with respect to $\mathcal{N}(\mu_n^P,\Sigma_n^P)$. Furthermore, assume $R = 0$ and that the cross-covariance between $\dot{C}X_n$ and $f(C X_n,t_n)$ is approximated as zero,
\begin{equation}
\C\big[\dot{C}X_n,f(CX_n,t_n)\mid z_{1:n-1}\big] \approx 0.
\end{equation}
Then the probabilistic solver proposed in \citet{Kersting2016UAI} is a Bayesian quadrature approximation to Eq. \eqref{eq:general_gaussian_update}.
\end{proposition}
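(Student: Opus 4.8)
The plan is to expand every intractable moment in Eq.~\eqref{eq:general_gaussian_update} by splitting the measurement $Z_n = \dot{C}X_n - f(CX_n,t_n)$ into its linear part $\dot{C}X_n$, whose moments under the predictive law $\mathcal{N}(\mu_n^P,\Sigma_n^P)$ are available in closed form, and its non-linear part $f(CX_n,t_n)$, whose moments will ultimately be handled by quadrature. Writing $f \equiv f(CX_n,t_n)$ for brevity and using linearity of $\E$ together with bilinearity of $\C$, I would first record $\hat{z}_n = \dot{C}\mu_n^P - \E[f\mid z_{1:n-1}]$ and then expand
\begin{equation*}
\begin{split}
S_n &= \dot{C}\Sigma_n^P\dot{C}^\T + \V[f\mid z_{1:n-1}] \\
&\quad - \C[\dot{C}X_n,f\mid z_{1:n-1}] - \C[f,\dot{C}X_n\mid z_{1:n-1}] + R,
\end{split}
\end{equation*}
together with $K_n = \big(\Sigma_n^P\dot{C}^\T - \C[X_n,f\mid z_{1:n-1}]\big)S_n^{-1}$.

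The second step is to impose the two hypotheses. Setting $R=0$ removes the artificial noise, while the assumption $\C[\dot{C}X_n,f\mid z_{1:n-1}]\approx 0$ annihilates the two mixed terms in $S_n$, leaving $S_n \approx \dot{C}\Sigma_n^P\dot{C}^\T + \V[f\mid z_{1:n-1}]$. The only remaining intractable objects are then precisely the predictive mean $\E[f\mid z_{1:n-1}]$, the predictive variance $\V[f\mid z_{1:n-1}]$, and the state cross-covariance $\C[X_n,f\mid z_{1:n-1}]$ of the vector field.

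The third step is to replace each of these by the Bayesian quadrature estimate associated with the rule $\{\mathcal{X}_{n,j},w_{n,j}\}_{j=1}^J$ for $\mathcal{N}(\mu_n^P,\Sigma_n^P)$, so that, for instance, $\E[f\mid z_{1:n-1}] \approx \sum_{j} w_{n,j}\, f(C\mathcal{X}_{n,j},t_n)$, with the analogous weighted sums for the second-moment quantities \citep{Briol2015probint}. Substituting these sums into the simplified update yields explicit expressions, and the final step is to match them term by term against the update of \citet{Kersting2016UAI}: their quadrature estimate of the vector field plays the role of $\E[f\mid z_{1:n-1}]$ in $\hat{z}_n$, their estimated field uncertainty plays the role of $\V[f\mid z_{1:n-1}]$ in $S_n$, and their state--field covariance matches the surviving $\C[X_n,f\mid z_{1:n-1}]$ entering $K_n$.

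I expect the main obstacle to be careful bookkeeping of the cross-covariance terms rather than any deep difficulty: one must track exactly which terms the single hypothesis $\C[\dot{C}X_n,f\mid z_{1:n-1}]\approx 0$ eliminates—only the mixed terms in the innovation covariance $S_n$—and confirm that it does \emph{not} touch the full-state covariance $\C[X_n,f\mid z_{1:n-1}]$ in the gain, which \citet{Kersting2016UAI} retain through their quadrature. The secondary, purely expository, difficulty is to translate the notation of \citet{Kersting2016UAI}, whose derivation is phrased in terms of constructing synthetic gradient measurements, into the present moment-matching form so that the identification of each quadrature sum is unambiguous; this parallels the reduction already carried out for \citet{schober2018} in Proposition~\ref{prop:schober}.
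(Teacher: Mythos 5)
Your expansion of $\hat z_n$, $S_n$, and $K_n$ is the same one the paper performs, and up to the treatment of the gain the two arguments coincide. The gap is in your final identification. You assert that the hypothesis only removes the mixed terms in $S_n$, that the full state--field cross-covariance $\C[X_n,f(CX_n,t_n)\mid z_{1:n-1}]$ survives in $K_n$, and that \citet{Kersting2016UAI} retain this term through a quadrature estimate. That last claim is wrong: the update of \citet{Kersting2016UAI} is a \emph{standard} Kalman update against a synthetic observation of $\dot C X_n$ whose noise is inflated by the Bayesian-quadrature variance of the vector field, so its gain is exactly $K_n \approx \Sigma_n^P\dot C^\T S_n^{-1}$ with $S_n \approx \dot C\Sigma_n^P\dot C^\T + \V[f(CX_n,t_n)\mid z_{1:n-1}]$; there is no state--field cross-covariance in their gain. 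Consequently the cross-covariance must be dropped from $K_n$ as well, which is what the paper's proof does (it invokes, in effect, the stronger approximation $\C[X_n,f(CX_n,t_n)\mid z_{1:n-1}]\approx 0$, of which the displayed hypothesis $\C[\dot C X_n,f(CX_n,t_n)\mid z_{1:n-1}]=\dot C\,\C[X_n,f(CX_n,t_n)\mid z_{1:n-1}]\approx 0$ is the part visible in $S_n$). If you instead keep $\C[X_n,f]$ in the gain and estimate it by quadrature, you do obtain a perfectly sensible filter --- but it is the general Bayesian-quadrature Gaussian filter in the style of \citet{PS15}, which Section~\ref{subsec:quadrature} explicitly presents as a \emph{different, new} solver, not the method of \citet{Kersting2016UAI}. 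So your proof as written establishes equivalence with the wrong algorithm.

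A secondary point: the paper's statement of the result also pins down which Bayesian quadrature rule is meant (squared-exponential kernel, a fixed set of nodes affinely transformed by $\mu_n^P$ and $\Sigma_n^P$); your step three should name this to make the match with \citet{Kersting2016UAI} exact rather than generic. You correctly identified the bookkeeping of cross-covariance terms as the crux --- you just resolved it in the direction that leads away from the claimed target.
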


A proof of Proposition \ref{prop:kersting} is given in Appendix \ref{appendix:proof3}.

While a cross-covariance assumption of Proposition \ref{prop:kersting} reproduces the method of \citet{Kersting2016UAI},
Bayesian quadrature approximations have previously been used for Gaussian filtering in signal processing applications  by \citet{PS15},
which in this context gives a new solver.

\subsection{Affine Vector Fields}\label{subsec:affine_vector_fields}
It is instructive to examine the particular case when the vector field in Eq. \eqref{eq:ode} is affine. That is,
\begin{equation}\label{eq:affine_vector_field}
f(y(t),t) = \Lambda(t) y(t) + \zeta(t).
\end{equation}
In such a case, Eq. \eqref{eq:inference_problem} becomes a linear Gaussian system, which is solved exactly by a Kalman filter. The equations for implementing this Kalman filter are precisely Eq. \eqref{eq:kalman_prediction} and Eq. \eqref{eq:general_gaussian_update}, although the latter set of equations can be simplified. Define  $H_n = \dot{C} - \Lambda(t_n) C$, then the update equations become
\begin{subequations}\label{eq:affine_gaussian_update}
\begin{align}
S_n &= H_n \Sigma_n^P H_n^\T + R, \\
K_n &= \Sigma_n^P H_n^\T S_n^{-1}, \\
\mu_n^F &= \mu_n^P + K_n \big(\zeta(t_n) - H_n \mu_n^P  \big)\\
\Sigma_n^F &= \Sigma_n^P - K_n S_n K_n^\T.
\end{align}
\end{subequations}
\begin{lemma}\label{lem:ekfukf_affine}
Consider the inference problem in Eq. \eqref{eq:inference_problem} with an affine vector field as given in Eq. \eqref{eq:affine_vector_field}. %
Then the EKF reduces to the exact Kalman filter, which uses the update in Eq. \eqref{eq:affine_gaussian_update}. Furthermore, the same holds for Gaussian filters using a quadrature approximation to Eq. \eqref{eq:general_gaussian_update}, provided that it integrates polynomials correctly up to second order with respect to the distribution $\mathcal{N}(\mu_n^P,\Sigma_n^P)$. 
\end{lemma}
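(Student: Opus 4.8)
The plan is to reduce everything to a single structural observation: for the affine vector field of Eq.~\eqref{eq:affine_vector_field}, the measurement map is itself affine in the state. Writing $H_n = \dot{C} - \Lambda(t_n) C$,
\begin{equation*}
\dot{C} X_n - f(C X_n, t_n) = H_n X_n - \zeta(t_n).
\end{equation*}
Once this linearity is isolated, each approximate filter inherits exactness by a short argument, and the only genuine work is verifying that the approximations do not degrade the first two moments.

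For the EKF, I would first note that the Jacobian of an affine map is constant, $J_f(y,t) = \Lambda(t)$, independent of the expansion point, so the first-order Taylor expansion underlying Eq.~\eqref{eq:ekf} reproduces $f$ exactly and $\tilde{C}_n = \dot{C} - J_f(C\mu_n^P,t_n) C = H_n$. Substituting $\tilde{C}_n = H_n$ into Eq.~\eqref{eq:ekf}, together with $\hat{z}_n = H_n \mu_n^P - \zeta(t_n)$ and $z_n = 0$, I would read off term by term that the EKF update coincides with Eq.~\eqref{eq:affine_gaussian_update}. Because Eq.~\eqref{eq:inference_problem} is then a linear-Gaussian state-space model, the Kalman filter is exact, so this EKF is the exact filter.

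For the quadrature filters, the idea is to write the approximations of $\hat{z}_n$, $S_n$, and $K_n$ explicitly as weighted node sums and substitute the affine form $H_n \mathcal{X}_{n,j} - \zeta(t_n)$. Exact integration of affine functions gives $\sum_j w_{n,j} = 1$ and $\sum_j w_{n,j} \mathcal{X}_{n,j} = \mu_n^P$, so the approximate predicted measurement mean is exactly $\hat{z}_n = H_n \mu_n^P - \zeta(t_n)$, and hence every centered node satisfies $(H_n \mathcal{X}_{n,j} - \zeta(t_n)) - \hat{z}_n = H_n(\mathcal{X}_{n,j} - \mu_n^P)$. Pulling $H_n$ out of the weighted outer-product sums, the residual factor is the quadrature estimate $\sum_j w_{n,j}(\mathcal{X}_{n,j} - \mu_n^P)(\mathcal{X}_{n,j} - \mu_n^P)^\T$ of the predictive covariance.

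The one step that requires care---and the only place the degree-two hypothesis enters---is showing this estimate equals $\Sigma_n^P$ exactly. I would expand the centered outer product into the monomials $(\mathcal{X}_{n,j})_a (\mathcal{X}_{n,j})_b$ and the cross terms, apply second-order exactness (so that $\sum_j w_{n,j}(\mathcal{X}_{n,j})_a(\mathcal{X}_{n,j})_b = (\Sigma_n^P)_{ab} + (\mu_n^P)_a(\mu_n^P)_b$) together with the already-established degree-one identities, and watch the mean-squared terms cancel to leave $\Sigma_n^P$. This yields $S_n = H_n \Sigma_n^P H_n^\T + R$ and, by the identical argument applied to the cross-covariance sum, $K_n = \Sigma_n^P H_n^\T S_n^{-1}$, matching Eq.~\eqref{eq:affine_gaussian_update}. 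Since the EKF special case already delivers the exact Kalman filter, this identity shows the quadrature filter reproduces it as well. I expect the main obstacle to be purely this bookkeeping cancellation; there is no genuine analytic difficulty once the affine structure of the measurement map has been extracted.
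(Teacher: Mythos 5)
Your proposal is correct and follows essentially the same route as the paper's proof: the EKF's truncated Taylor expansion is exact for an affine vector field, and the moments in Eq.~\eqref{eq:general_gaussian_update} are polynomials of degree at most two, hence computed exactly by the assumed quadrature rule. Your version merely makes explicit the bookkeeping (the weighted node sums and the cancellation of mean-squared terms) that the paper compresses into a single sentence.
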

\begin{proof}
Since the Kalman filter, the EKF, and the quadrature approach all use Eq. \eqref{eq:kalman_prediction} for prediction, it is sufficient to make sure that the EKF and the quadrature approximation compute Eq. \eqref{eq:general_gaussian_update} exactly, just as the Kalman filter. Now the EKF approximates the vector field by an affine function for which it computes the moments in Eq. \eqref{eq:general_gaussian_update} exactly. Since this affine approximation is formed by a truncated Taylor series, it is exact for affine functions and the statement pertaining to the EKF holds. Furthermore, the Gaussian integrals in Eq. \eqref{eq:general_gaussian_update} are polynomials of degree at most two for affine vector fields and are therefore computed exactly by the quadrature rule by assumption. \qed
\end{proof}
\subsection{Particle Filtering}
The Gaussian filtering methods from Section \ref{subsec:gaussian_filtering} may often suffice. However, there are cases where more sophisticated inference methods may be preferable, for instance, when the posterior becomes multi-modal due to chaotic behavior or `numerical bifurcations'. That is, when it is numerically unknown whether the true solution is above or below a certain threshold that determines the limit behaviour of its trajectory. While sampling based probabilistic solvers such as those of 
\citet{o.13:_bayes_uncer_quant_differ_equat}, \citet{conrad_probability_2017}, \citet{teymur2016probabilistic}, \citet{Lie17}, \citet{AbdulleGaregnani17}, and \citet{Teymur2018a} can pick up such phenomena, the Gaussian filtering based ODE solvers discussed in Section \ref{subsec:gaussian_filtering} cannot. However, this limitation may be overcome by approximating the filtering distribution of the inference problem in Eq. \eqref{eq:inference_problem} with particle filters
that are based on a sequential formulation of importance sampling \citep{Doucet2001}.

A particle filter operates on a set of particles, $\{X_{n,j}\}_{j=1}^J$, a set of positive weights $\{w_{n,j}\}_{j=1}^J$ associated to the particles that sum to one and an importance density, $g(x_{n+1}\mid x_n, z_n)$. The particle filter then cycles through three steps (1) propagation, (2) re-weighting, and (3) re-sampling \citep[Chapter 7.4]{Sarkka2013}.

The propagation step involves sampling particles at time $n+1$ from the importance density:
\begin{equation}
X_{n+1,j} \sim g(x_{n+1}\mid X_{n,j}, z_n).
\end{equation}
The re-weighting of the particles is done by a likelihood ratio with the product of the measurement density and the transition density of Eq. \eqref{eq:inference_problem}, and the importance density. That is, the updated weights are given by
\begin{subequations}\label{eq:weight_update}
\begin{align}
\rho(x_{n+1},x_n) &= \frac{p(z_{n+1}\mid x_{n+1}) p(x_{n+1}\mid x_{n}) }{g(x_{n+1}\mid x_{n},z_{n+1})},\label{eq:pflr}\\
w_{n+1,j} &\propto \rho\big(X_{n+1,j},X_{n,j}\big) w_{n,j},
\end{align}
\end{subequations}
where the proportionality sign indicates that the weights need to be normalised to sum to one after they have been updated according to Eq. \eqref{eq:weight_update}.  
The weight update is then followed by an optional re-sampling step \citep[Chapter 7.4]{Sarkka2013}. While not re-sampling in principle yields a valid algorithm, it becomes necessary in order to avoid the degeneracy problem for long time series \citep[Chapter 1.3]{Doucet2001}.   
The efficiency of particle filters depends on the choice of importance density. In terms of variance, the locally optimal importance density is given by \citep{Doucet2001}
\begin{equation}\label{eq:optimal_id}
g(x_n\mid x_{n-1},z_{n}) \propto p(z_n\mid x_n)p(x_n\mid x_{n-1}).
\end{equation}
While Eq. \eqref{eq:optimal_id} is almost as intractable as the full filtering distribution, the Gaussian filtering methods from Section \ref{subsec:gaussian_filtering} can be used to make a good approximation. For instance, the approximation to the optimal importance density using Eq. \eqref{eq:schober} is given by
\begin{subequations}\label{eq:schober_proposal}
\begin{align}
S_n &= \dot{C} Q(h)\dot{C}^\T + R, \\
K_n &= Q(h) \dot{C}^\T S_n^{-1}, \\
\hat{z}_n &= \dot{C} A(h)x_{n-1} - f\big(CA(h)x_{n-1},t_n\big), \\
\mu_n &= A(h)x_{n-1} + K_n(z_n - \hat{z}_n), \\
\Sigma_n &= Q(h) - K_n S_n K_n^\T, \\
&g(x_n\mid x_{n-1},z_n) = \mathcal{N}(x_n;\mu_n,\Sigma_n).
\end{align}
\end{subequations}
An importance density can be similarly constructed from Eq. \eqref{eq:ekf}, resulting in:
\begin{subequations}\label{eq:ekf_proposal}
\begin{align}
\tilde{C}_n &= \dot{C} -   J_f\big(CA(h)x_{n-1} ,t_n\big)C, \\
S_n &= \tilde{C}_n Q(h) \tilde{C}_n^\T + R, \\
K_n &= Q(h) \tilde{C}_n^\T S_n^{-1}, \\
\hat{z}_n &= \dot{C} A(h)x_{n-1}  - f\big(CA(h)x_{n-1} ,t_n\big), \\
\mu_n &= A(h)x_{n-1} + K_n(z_n - \hat{z}_n), \\
\Sigma_n &= Q(h) - K_n S_n K_n^\T,\\
&g(x_n\mid x_{n-1},z_n) = \mathcal{N}(x_n;\mu_n,\Sigma_n).
\end{align}
\end{subequations}
Note that we have assumed $\xi(h) = 0$ in Eqs. \eqref{eq:schober_proposal} and \eqref{eq:ekf_proposal}, which can be extended to $\xi(h) \neq 0$ by replacing $A(h)x_{n-1}$ with $A(h)x_{n-1} + \xi(h)$. We refer the reader to \citet[Section II.D.2]{Doucet2000} for a more thorough discussion on the use of local linearisation methods to construct importance densities.

We conclude this section with a brief discussion on the convergence of particle filters. The following theorem is given by \citet{Crisan2002}. 
\begin{theorem}\label{thm:m2convergence}
Let $\rho(x_{n+1},x_n)$ in Eq. \eqref{eq:pflr} be bounded from above and denote the true filtering measure associated with Eq. \eqref{eq:inference_problem} at time $n$ by $p_n^R$ and let $\hat{p}_n^{R,J}$ be its particle approximation using $J$ particles with importance density $g(x_{n+1}\mid x_{n},z_{n+1})$. Then, for all $n \in \mathbb N_0$, there exists a constant $c_n$ independent of $J$ such that for any bounded Borel function $\phi \colon \mathbb{R}^{d(q+1)} \to \mathbb{R}$ the following bound holds
\begin{equation}
\E_{\mathrm{MC}}[(\langle \hat{p}_n^{R,J},\phi \rangle - \langle p_n^R,\phi \rangle)^2]^{1/2} \leq c_n J^{-1/2}\norm{\phi},
\end{equation}
where $\langle p, \phi\rangle$ denotes $\phi$ integrated with respect to $p$ and $\E_{\mathrm{MC}}$ denotes the expectation over realisations of the particle method, and $\norm{\cdot}$ is the supremum norm.
\end{theorem}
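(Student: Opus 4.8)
The plan is to proceed by induction on the time index $n$, exploiting the recursive structure of the filtering problem in Eq. \eqref{eq:inference_problem}. I would first write the exact filtering recursion as the composition of two operations on measures: a prediction step applying the transition kernel of Eq. \eqref{eq:inference_problem}, and an update step that multiplies by the measurement likelihood $p(z_n \mid x_n)$ and renormalises. The particle recursion mirrors this, replacing prediction by sampling from the importance density $g$, the exact update by the reweighting of Eq. \eqref{eq:weight_update}, and appending the resampling step. The target of the induction is precisely the stated bound $\E_{\mathrm{MC}}[(\langle \hat{p}_n^{R,J},\phi\rangle - \langle p_n^R, \phi\rangle)^2]^{1/2} \le c_n J^{-1/2}\norm{\phi}$ for every bounded Borel $\phi$, with $c_n$ independent of $J$.

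For the base case $n=0$, the particles are drawn i.i.d. (with weights correcting for the initial importance density), so $\langle \hat{p}_0^{R,J},\phi\rangle$ is an average of $J$ i.i.d. terms with mean $\langle p_0^R,\phi\rangle$. A direct variance computation then yields the claim, with $c_0$ a constant multiple of the supremum of the relevant weight, which is finite because $\rho$ is bounded from above.

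For the inductive step I would introduce intermediate measures to isolate three sources of error at time $n+1$ and bound them separately via Minkowski's inequality. The first is the conditional Monte Carlo error: given the weighted particles at time $n$, the propagated and reweighted particles are conditionally independent, so the difference between $\langle \hat{p}_{n+1}^{R,J},\phi\rangle$ and its conditional expectation has conditional variance $O(J^{-1})$; taking $\E_{\mathrm{MC}}$ by the tower property and using $\sup\rho < \infty$ to bound the weights gives an $O(J^{-1/2}\norm{\phi})$ term. The second is the propagation of the error already present at time $n$, which I would control by showing that the exact filtering map (prediction followed by normalised update) is Lipschitz, when tested against bounded functions, with a constant depending on $\sup\rho$; applying this to $\hat{p}_n^{R,J}$ versus $p_n^R$ and invoking the inductive hypothesis contributes a further $O(c_n J^{-1/2}\norm{\phi})$. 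The third, present only when resampling is used, is the resampling error, which is conditionally unbiased with conditional variance $O(J^{-1})$ and hence also $O(J^{-1/2})$ in $L^2$. Summing the three bounds and absorbing constants into a new $c_{n+1}$ closes the induction.

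The main obstacle is the normalisation in the update step, which turns the map from the empirical measure to the updated measure into a ratio and hence a nonlinear operation on measures. I would handle it by writing the updated expectation as a ratio of integrals against the unnormalised likelihood and the likelihood itself, then adding and subtracting to split the error in the ratio into a numerator error and a denominator error over a common denominator. The assumption that $\rho$ is bounded from above is exactly what keeps the denominator bounded away from a degenerate value and the numerator integrand bounded, so each piece inherits the $J^{-1/2}$ rate with a finite constant; without this boundedness $c_n$ could fail to be finite. Finally I would note that the constants accumulate multiplicatively along the recursion, so $c_n$ generally grows with $n$, yet at every step remains independent of $J$, which is all the statement requires.
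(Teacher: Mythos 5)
The paper offers no proof of this statement: Theorem~\ref{thm:m2convergence} is imported verbatim from Crisan and Doucet (2002), so there is no internal argument to compare yours against. Your sketch is essentially the standard inductive proof given in that reference — base case by an i.i.d.\ variance computation, inductive step by splitting the time-$(n+1)$ error into the conditional importance-sampling error, the error propagated through the prediction--update map, and the resampling error, each of order $J^{-1/2}\norm{\phi}$ in $L^2$, with the nonlinearity of the normalised update handled by the ratio decomposition. One small correction to your reasoning about the update step: the upper bound on $\rho$ is what keeps the weights and the numerator terms controlled, but it does not by itself keep the denominator away from zero; what is needed there is positivity of the exact normalising constant (automatic in this model since $R>0$ and all densities are Gaussian, hence strictly positive), and the standard trick is to divide through by that \emph{exact} constant and bound the remaining random ratio by $\norm{\phi}$, so that no random quantity ever appears in a denominator of the final estimate. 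With that adjustment your outline matches the argument of the cited source and would close the induction as claimed.
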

Theorem \ref{thm:m2convergence} shows that we can decrease the distance (in the weak sense) between $\hat{p}_n^{R,J}$ and $p_n^R$ by increasing $J$. However, the object we want to approximate is $p_n^0$ (the exact filtering measure associated with Eq. \eqref{eq:inference_problem} for $R=0$) but setting $R = 0$ makes the likelihood ratio in Eq. \eqref{eq:pflr} ill-defined for the proposal distributions in Eqs. \eqref{eq:schober_proposal} and \eqref{eq:ekf_proposal}. 
This is because, when $R=0$, then $p(z_{n+1}\mid x_{n+1})p(x_{n+1}\mid x_n)$ has its support on the surface $\dot{C}x_{n+1} = f(Cx_{n+1},t_{n+1})$ while Eqs. \eqref{eq:schober_proposal} or \eqref{eq:ekf_proposal} imply that the variance of $\dot{C}X_{n+1}$ or $\tilde{C}_{n+1}X_{n+1}$ will be zero with respect to $g(x_{n+1}\mid x_n, z_{n+1})$, respectively. That is, $g(x_{n+1}\mid x_n, z_{n+1})$ is supported on a hyperplane. It follows that the null-sets of $g(x_{n+1}\mid x_n, z_{n+1})$ are not necessarily null-sets of $p(z_{n+1}\mid x_{n+1})p(x_{n+1}\mid x_n)$ and the likelihood ratio in Eq. \eqref{eq:pflr} can therefore be undefined.
However, a straightforward application of the triangle inequality together with Theorem \ref{thm:m2convergence} gives
\begin{equation}
\begin{split}
&\E_{\mathrm{MC}}[(\langle \hat{p}_n^{R,J},\phi \rangle - \langle p_n^0,\phi \rangle)^2]^{1/2} \\
\quad&\leq \E_{\mathrm{MC}}[(\langle \hat{p}_n^{R,J},\phi \rangle - \langle p_n^R,\phi \rangle)^2]^{1/2} \\
\qquad&+ \E_{\mathrm{MC}}[(\langle p_n^{R},\phi \rangle - \langle p_n^0,\phi \rangle)^2]^{1/2} \\
\ &= \E_{\mathrm{MC}}[(\langle \hat{p}_n^{R,J},\phi \rangle - \langle p_n^R,\phi \rangle)^2]^{1/2} \\
\qquad &+  \big\lvert\langle p_n^{R},\phi \rangle - \langle p_n^0,\phi \rangle\big\rvert\\
&\leq c_n J^{-1/2}\norm{\phi} + \big\lvert\langle p_n^R,\phi \rangle -\langle p_n^0,\phi \rangle\big\rvert.
\end{split}
\end{equation}
The last term vanishes as $R \to 0$. That is, the error can be controlled by increasing the number of particles $J$ and decreasing $R$. Though a word of caution is appropriate, as particle filters can become ill-behaved in practice if the likelihoods are too narrow (too small $R$). However, this also depends on the quality of the proposal distribution.

Lastly, while Theorem \ref{thm:m2convergence} is only valid if $\rho(x_{n+1},x_n)$ is bounded, this can be ensured by either inflating the covariance of the proposal distribution or replacing the Gaussian proposal with a Student's t proposal \citep[Chapter 9]{Cappe2005}. 

\section{A Stability Result for Gaussian Filters}
ODE solvers are often characterised by the properties of their solution to the linear test equation
\begin{equation}\label{eq:linear_test_equation}
\dot{y}(t) = \lambda y(t), \ y(0) = 1,
\end{equation}
where $\lambda$ is some complex number. A numerical solver is said to be \emph{A-stable} if the approximate solution tends
to zero for any fixed step size $h$ whenever the real part of $\lambda$ resides in the left half-plane \citep{Dahlquist1963}. Recall that if $y_0 \in \mathbb{R}^d$ and $\Lambda \in \mathbb{R}^{d\times d}$ then the ODE $\dot{y}(t) = \Lambda y(t), \ y(0) = y_0$ is said to be asymptotically stable if $\lim_{t \to \infty} y(t) = 0$, which is precisely when the real part of eigenvalues of $\Lambda$ are in the left half-plane. That is, A-stability is the notion that a numerical solver preserves asymptotic stability of linear time-invariant ODEs.

While the present solvers are not designed to solve complex valued ODEs, a real system equivalent to Eq. \eqref{eq:linear_test_equation} is given by
\begin{equation}\label{eq:linear_test_equation_real}
\dot{y}(t)  = \Lambda_{\text{test}} y(t), \quad y^\T(0) = [1\  0], \ 
\end{equation}
where $\lambda = \lambda_1 + i \lambda_2$ and
\begin{equation}\label{eq:complex_test_matrix}
\Lambda_{\text{test}} = \begin{bmatrix} \lambda_1 & - \lambda_2 \\ \lambda_2 & \lambda_1  \end{bmatrix}.
\end{equation}
However, to leverage classical stability results from the theory of Kalman filtering we investigate a slightly different test equation, namely
\begin{equation}
\dot{y}(t) = \Lambda y(t),\ y(0) = y_0,
\end{equation}
where $\Lambda \in \mathbb{R}^{d\times d}$ is of full rank. In this case Eqs. \eqref{eq:kalman_prediction} and \eqref{eq:affine_gaussian_update} give the following recursion for $\mu_n^P$
\begin{subequations}
\begin{align}
\mu_{n+1}^P &= (A(h) - A(h) K_n H) \mu_n^P, \\
\mu_n^F &= (\mathrm{I} - K_n H)\mu_n^P,
\end{align}
\end{subequations}
where we recall that $H = \dot{C}- C\Lambda$ and $z_n = 0$. If there exists a limit gain $\lim_{n\to \infty} K_n = K_\infty$ then asymptotic stability of the filter holds provided that the eigenvalues of $(A(h) - A(h) K_\infty H)$ are strictly within the unit circle \citep[Appendix C, page 341]{anderson1979}. That is, $\lim_{n\to\infty} \mu_n^P = 0$ and as a direct consequence $\lim_{n\to\infty} \mu_n^F = 0$.

We shall see that the Kalman filter using an IWP($q$) prior is asymptotically stable. For the IWP($q$) process on $\mathbb{R}^d$ we have $u = 0$, $L = \mathrm{e}_{q+1}\otimes \Gamma^{1/2}$, and $F = (\sum_{i=1}^q \mathrm{e}_i\mathrm{e}_{i+1}^\T)\otimes \mathrm{I}$, where $\mathrm{e}_i \in \mathbb{R}^d$ is the $i$th canonical eigenvector, $\Gamma^{1/2}$ is the symmetric square root of some positive semi-definite matrix $\Gamma \in \mathbb{R}^{d\times d}$, $\mathrm{I} \in \mathbb{R}^{d\times d}$ is the identity matrix, and $\otimes$ is Kronecker's product. By using Eq. \eqref{eq:lti_disc}, the properties of Kronecker products, and the definition of the matrix exponential the equivalent discrete-time system is given by
\begin{subequations}\label{eq:lti_disc_iwp_nd} 
\begin{align}
A(h) &= A^{(1)}(h) \otimes \mathrm{I},  \label{eq:lti_disc_iwp_nd_transition}\\
\xi(h) &= 0, \\
Q(h) &= Q^{(1)}(h) \otimes \Gamma, \label{eq:lti_disc_iwp_nd_cov}
\end{align}
\end{subequations}
where $A^{(1)}(h) \in \mathbb{R}^{(q+1)\times (q+1)}$ and $Q^{(1)}(h) \in \mathbb{R}^{(q+1)\times (q+1)}$ are given by \citep[Appendix A]{Kersting2018}\footnote{Note that \cite{Kersting2018} uses indexing $i,j=0,\ldots,q$ while we here use $i,j=1,\ldots,q+1$.}

\begin{subequations}\label{eq:lti_disc_iwp_1d}
\begin{align}
A_{ij}^{(1)}(h) &= \mathbb{I}_{i \leq j} \frac{h^{j-i}}{(j-i)!}, \label{eq:lti_disc_iwp_1d_transition} \\
Q_{ij}^{(1)}(h) &=  \frac{ h^{2q+3-i-j}}{(2q+3-i-j)(q+1-i)!(q+1-j)!}, \label{eq:lti_disc_iwp_1d_cov}
\end{align}
\end{subequations}
and $\mathbb{I}_{i \leq j}$ is an indicator function.  
Before proceeding we need to introduce the notions of stabilisability and detectability from Kalman filtering theory. These notions can be found in \citet[Appendix C]{anderson1979}.

\begin{definition}[Complete Stabilisability]
The pair $[A,G]$ is completely stabilisable if $w^\T G = 0$ and $w^\T A = \eta w^\T$ for some constant $\eta$ implies $\abs{\eta} < 1$ or $w = 0$.
\end{definition}

\begin{definition}[Complete Detectability]\label{def:complete_detectability}
\footnote{
\citet{anderson1979} denotes the measurement matrix by $H^\T$ while we denote it by $H$. With this in mind our notion of complete detectability does not differ from \citet{anderson1979}.
}
$[A,H]$ is completely detectable if $[A^\T,H^\T]$ is completely stabilisable.
\end{definition}

Before we state the stability result of this section the following two lemmas are useful.

\begin{lemma}\label{lem:process_cov_rank}
Consider the discretised IWP($q$) prior on $\mathbb{R}^d$ as given by Eq. \eqref{eq:lti_disc_iwp_nd}. Let $h > 0$ and $\Gamma$ be positive definite. Then, the $d\times d$ blocks of $Q(h)$, denoted by $Q_{i,j}(h),\ i,j = 1,2,\ldots,q+1$ are of full rank.
\end{lemma}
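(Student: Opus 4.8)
The plan is to exploit the Kronecker-product structure of $Q(h)$ recorded in Eq. \eqref{eq:lti_disc_iwp_nd_cov}. Since $Q(h) = Q^{(1)}(h) \otimes \Gamma$, the defining property of the Kronecker product means that the $(i,j)$-th $d \times d$ block of $Q(h)$ is simply the scalar $Q^{(1)}_{ij}(h)$ multiplying $\Gamma$, i.e.
\begin{equation*}
Q_{i,j}(h) = Q^{(1)}_{ij}(h)\, \Gamma, \qquad i,j = 1,\dots,q+1.
\end{equation*}
Because $\Gamma$ is positive definite by assumption, it has full rank $d$; hence $Q_{i,j}(h) = Q^{(1)}_{ij}(h)\,\Gamma$ has full rank if and only if the scalar factor $Q^{(1)}_{ij}(h)$ is nonzero. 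The whole lemma therefore reduces to verifying that every scalar entry $Q^{(1)}_{ij}(h)$ is strictly positive.

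For this I would read off the closed form in Eq. \eqref{eq:lti_disc_iwp_1d_cov},
\begin{equation*}
Q^{(1)}_{ij}(h) = \frac{h^{2q+3-i-j}}{(2q+3-i-j)\,(q+1-i)!\,(q+1-j)!},
\end{equation*}
and check that each of its constituent pieces is finite and positive. The factorials $(q+1-i)!$ and $(q+1-j)!$ are well defined precisely because the index range $1 \le i,j \le q+1$ forces $q+1-i \ge 0$ and $q+1-j \ge 0$, so no negative-argument factorial appears. The remaining task is to confirm that the exponent and the denominator factor $2q+3-i-j$ are positive: the extreme case $i=j=q+1$ gives $2q+3-i-j = 1$, while all other admissible $(i,j)$ give a strictly larger value, so $2q+3-i-j \ge 1 > 0$ throughout the block range.

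With $h > 0$, positivity of the exponent yields $h^{2q+3-i-j} > 0$, and both the numerator and the positive-integer denominator are then strictly positive, so $Q^{(1)}_{ij}(h) > 0$ for all admissible $i,j$. Consequently each block $Q_{i,j}(h)$ is a strictly positive scalar multiple of the positive-definite matrix $\Gamma$ and is therefore itself positive definite, in particular of full rank, which is the claim. I do not anticipate any genuine obstacle here; the only thing requiring care is the index bookkeeping—confirming that the factorial arguments never go negative and that the exponent $2q+3-i-j$ attains its minimum value $1$ rather than $0$ or a negative number—after which the result is immediate.
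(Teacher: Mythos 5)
Your argument is correct and follows the same route as the paper's own proof: write each block as $Q_{i,j}(h) = Q^{(1)}_{ij}(h)\,\Gamma$ via the Kronecker structure, check that the scalar $Q^{(1)}_{ij}(h)$ is strictly positive for $h>0$, and conclude from the positive definiteness of $\Gamma$. Your additional index bookkeeping (verifying $2q+3-i-j \ge 1$ and that the factorial arguments are nonnegative) is a harmless elaboration of what the paper leaves implicit.
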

\begin{proof}
From Eq. \eqref{eq:lti_disc_iwp_nd_cov} we have $Q_{i,j}(h) = Q_{i,j}^{(1)}(h) \Gamma$. From Eq. \eqref{eq:lti_disc_iwp_1d_cov} and $h> 0$ we have $Q_{i,j}^{(1)}(h) > 0$, and since $\Gamma$ is positive definite it is of full rank. It then follows that $Q_{i,j}(h)$ is of full rank as well. \qed
\end{proof}

\begin{lemma}\label{lem:eigspace}
Let $A(h)$ be the transition matrix of an IWP($q$) prior as given by Eq. \eqref{eq:lti_disc_iwp_nd_transition} and $h>0$, then $A(h)$ has a single eigenvalue given by $\eta = 1$. Furthermore, the right-eigenspace is given by
\begin{equation*}
\operatorname{span}[\mathrm{e}_1,\mathrm{e}_2,\ldots,\mathrm{e}_d],
\end{equation*}
where $\mathrm{e}_i \in \mathbb{R}^{(q+1)d}$ are canonical basis vectors, and the left-eigenspace is given by
\begin {equation*}
\operatorname{span}[\mathrm{e}_{qd+1},\mathrm{e}_{qd+2},\ldots,\mathrm{e}_{(q+1)d}].
\end{equation*}
\end{lemma}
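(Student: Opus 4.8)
The plan is to exploit the Kronecker factorisation $A(h) = A^{(1)}(h) \otimes \mathrm{I}$ and reduce everything to the $(q+1)\times(q+1)$ scalar-system matrix $A^{(1)}(h)$, whose eigenstructure is transparent. First I would observe from Eq. \eqref{eq:lti_disc_iwp_1d_transition} that $A^{(1)}(h)$ is upper triangular with unit diagonal (the case $i=j$ gives $h^0/0! = 1$), so its characteristic polynomial is $(\eta-1)^{q+1}$ and $\eta = 1$ is its only eigenvalue. Since the eigenvalues of a Kronecker product are the pairwise products of the factors' eigenvalues, and $\mathrm{I}$ has only the eigenvalue $1$, the matrix $A(h)$ likewise has the single eigenvalue $\eta = 1$.

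For the right-eigenspace I would analyse the nilpotent matrix $N \triangleq A^{(1)}(h) - \mathrm{I}$, which is strictly upper triangular with superdiagonal entries equal to $h \neq 0$; hence $N$ has exactly $q$ pivots, so $\operatorname{rank}(N) = q$ and $\dim \ker N = 1$. A direct check shows $N\mathrm{e}_1 = 0$ (the first column of $N$ vanishes), so the right-eigenspace of $A^{(1)}(h)$ is $\operatorname{span}[\mathrm{e}_1]$. Using the standard fact that $v \otimes w$ is a right-eigenvector of $A^{(1)}(h) \otimes \mathrm{I}$ with eigenvalue $1$ whenever $v \in \ker N$ and $w \in \mathbb{R}^d$ (the latter being the full eigenspace of $\mathrm{I}$), the right-eigenspace of $A(h)$ is $\operatorname{span}[\mathrm{e}_1 \otimes \mathrm{e}_j : j = 1,\ldots,d]$. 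The Kronecker indexing convention $\mathrm{e}_i \otimes \mathrm{e}_j = \mathrm{e}_{(i-1)d+j}$ then identifies these with $\mathrm{e}_1,\ldots,\mathrm{e}_d \in \mathbb{R}^{(q+1)d}$, as claimed.

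The left-eigenspace follows by applying the same argument to $A(h)^\T = (A^{(1)}(h))^\T \otimes \mathrm{I}$. Now $(A^{(1)}(h))^\T - \mathrm{I}$ is strictly lower triangular with nonzero subdiagonal, again of rank $q$, and its one-dimensional kernel is spanned by $\mathrm{e}_{q+1}$ (its last column vanishes). The Kronecker identity $\mathrm{e}_{q+1} \otimes \mathrm{e}_j = \mathrm{e}_{qd+j}$ then yields the left-eigenspace $\operatorname{span}[\mathrm{e}_{qd+1},\ldots,\mathrm{e}_{(q+1)d}]$.

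I expect the only real subtlety to be the bookkeeping: correctly pinning down the geometric multiplicity via the rank of the strictly triangular $N$ (to confirm the eigenspace is exactly $d$-dimensional, not larger) and translating the Kronecker index $(i-1)d+j$ into the labelling of the canonical basis of $\mathbb{R}^{(q+1)d}$. Both are routine once the factorisation is in place, so the argument amounts to a careful application of standard Kronecker-product spectral theory rather than anything requiring a new idea.
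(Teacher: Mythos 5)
Your proof is correct, but it takes a genuinely different route from the paper's. The paper reads off the single eigenvalue from the block upper-triangular structure of $A(h)$ and then determines both eigenspaces by explicit substitution: writing $w^\T = [w_1^\T,\ldots,w_{q+1}^\T]$ and solving $A(h)w = w$ (resp.\ $w^\T A(h) = w^\T$) block by block, working backwards (resp.\ forwards) and using $h>0$ to force $w_i = 0$ for $i\geq 2$ (resp.\ $i\leq q$). You instead push everything through the Kronecker factorisation, reducing to the scalar matrix $A^{(1)}(h)$ and its nilpotent part $N = A^{(1)}(h)-\mathrm{I}$: the rank-$q$ computation for $N$, together with $\operatorname{rank}(N\otimes \mathrm{I}) = \operatorname{rank}(N)\, d = qd$, pins the geometric multiplicity at exactly $d$, and the kernel vectors $\mathrm{e}_1$ and $\mathrm{e}_{q+1}$ of $N$ and $N^\T$ lift to the claimed spans. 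The one step to make explicit is that your ``standard fact'' only yields the containment $\ker N\otimes\mathbb{R}^d\subseteq\ker(N\otimes\mathrm{I})$; equality requires the dimension count $\dim\ker(N\otimes\mathrm{I}) = (q+1)d - qd = d$, which you correctly anticipate in your closing remarks but should state as part of the argument rather than as an afterthought. Your route is more structural and arguably cleaner---all the work happens in the $(q+1)\times(q+1)$ system, and the multiplicity bound comes for free from a rank identity---while the paper's substitution argument is more elementary and self-contained, needing no Kronecker spectral theory. Both ultimately rest on the same fact: the superdiagonal entries of $A^{(1)}(h)$ equal $h\neq 0$.
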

\begin{proof}
Firstly, from Eqs. \eqref{eq:lti_disc_iwp_nd_transition} and \eqref{eq:lti_disc_iwp_1d_transition} it follows that $A(h)$ is block upper-triangular with identity matrices on the block diagonal, hence the characteristic equation is given by
\begin{equation}
\det(A(h)-\eta\mathrm{I}) = (1-\eta)^{(q+1)d} = 0,
\end{equation}
we conclude that the only eigenvalue is $\eta = 1$. To find the right-eigenspace let $w^\T = [w_1^\T,w_2^\T,\ldots,w_{q+1}^\T]$, $w_i \in \mathbb{R}^d,\ i=1,2,\ldots,q+1$ and solve $A(h)w = w$, which by using Eqs. \eqref{eq:lti_disc_iwp_nd_transition} and \eqref{eq:lti_disc_iwp_1d_transition} can be written as
\begin{equation}
(A(h)w)_l = \sum_{r=0}^{q+1-l} \frac{h^r}{r!} w_{r+l}, \quad l = 1,2,\ldots,q+1,
\end{equation}
where $(\cdot)_l$ is the $l$th sub-vector of dimension $d$. Starting with $l = q+1$ we trivially have $w_{q+1} = w_{q+1}$.
For $l = q$ we have $w_q + w_{q+1}h = w_q$ but $h > 0$, hence $w_{q+1} = 0$. Similarly for $l = q-1$ we have $w_{q-1} = w_{q-1} + w_qh + w_{q+1} h^2/2 = w_{q-1} + w_q h + 0 \cdot h^2/2$. Again since $h > 0$ we have $w_q = 0$. By repeating this argument we have $w_1 = w_1$ and $w_i = 0,\ i = 2,3,\ldots,q+1$.  Therefore all eigenvectors $w$ are of the form $w^\T = [w_1^\T,0^\T,\ldots,0^\T] \in \operatorname{span}[\mathrm{e}_1,\mathrm{e}_2,\ldots,\mathrm{e}_d]$. Similarly, for the left eigenspace we have 
\begin{equation}
(w^\T A(h))_l = \sum_{r=0}^{l-1} \frac{h^r}{r!} w_{l-r}^\T, \quad l = 1,2,\ldots,q+1.
\end{equation}
Starting with $l = 1$ we have trivially that $w_1^\T = w_1^\T$. For $l = 2$ we have $w_2^\T + w_1^\T h = w_2^\T$ but $h>0$, hence $w_1=0$. For $l = 3$ we have $w_3^\T = w_3^\T + w_2^\T h + w_1^\T h^2/2 = w_3^\T + w_2^\T h + 0^\T\cdot h^2/2$ but $h > 0$ hence $w_2 =0$. By repeating this argument we have $w_i= 0, i=1,\ldots,q$ and $w_{q+1} = w_{q+1}$. Therefore, all left eigenvectors are of the form 
$w^\T = [0^\T,\ldots,0^\T,w_{q+1}^\T] \in \operatorname{span}[\mathrm{e}_{qd+1},\mathrm{e}_{qd+2},\ldots,\mathrm{e}_{(q+1)d}]$. 
\qed
\end{proof}
We are now ready to state the main result of this section. Namely, that the Kalman filter that produces exact inference in Eq. \eqref{eq:inference_problem} for linear vector fields is asymptotically stable if the linear vector field is of full rank.
\begin{theorem}\label{thm:stability}
Let $\Lambda \in \mathbb{R}^{d\times d}$ be a matrix with full rank and consider the linear ODE
\begin{equation}\label{eq:thmtest}
\dot{y}(t) = \Lambda y(t).
\end{equation}
Consider estimating the solution of Eq. \eqref{eq:thmtest} using an IWP(q) prior with the same conditions on $\Gamma$ as in Lemma \ref{lem:process_cov_rank}. Then the Kalman filter estimate of the solution to Eq. \eqref{eq:thmtest} is asymptotically stable.
\end{theorem}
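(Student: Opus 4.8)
The plan is to reduce the statement to the classical steady-state stability theory for the Kalman filter in \citet[Appendix C]{anderson1979}: if the pair $[A(h),Q^{1/2}(h)]$ is completely stabilisable and the pair $[A(h),H]$ is completely detectable, then the filter gain converges to a limit $K_\infty$ and the eigenvalues of the closed-loop matrix $A(h) - A(h)K_\infty H$ lie strictly inside the unit circle. Given the recursion for $\mu_n^P$ recalled above, this immediately yields $\lim_{n\to\infty}\mu_n^P = 0$ and hence $\lim_{n\to\infty}\mu_n^F = 0$, which is precisely asymptotic stability. The whole proof therefore reduces to checking the two structural conditions, and Lemmas \ref{lem:process_cov_rank} and \ref{lem:eigspace} are tailored exactly for this.

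First I would verify complete stabilisability of $[A(h),Q^{1/2}(h)]$. By Lemma \ref{lem:eigspace} the only eigenvalue of $A(h)$ is $\eta = 1$, whose left-eigenspace is $\operatorname{span}[\mathrm{e}_{qd+1},\ldots,\mathrm{e}_{(q+1)d}]$, so any left-eigenvector has the block form $w^\T = [0^\T,\ldots,0^\T,w_{q+1}^\T]$. Since $w^\T Q^{1/2}(h) = 0$ is equivalent to $w^\T Q(h) w = 0$, and for such $w$ the quadratic form collapses to $w_{q+1}^\T Q_{q+1,q+1}(h) w_{q+1}$, the positive-definiteness of the bottom-right block $Q_{q+1,q+1}(h)$ granted by Lemma \ref{lem:process_cov_rank} forces $w_{q+1} = 0$, i.e. $w = 0$. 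As $\abs{\eta} = 1 \not< 1$, the definition of complete stabilisability is met precisely because no nonzero left-eigenvector is annihilated by the process noise.

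Next I would verify complete detectability of $[A(h),H]$, i.e. complete stabilisability of $[A^\T(h),H^\T]$, for which the relevant objects are the right-eigenvectors of $A(h)$. By Lemma \ref{lem:eigspace} these span $\operatorname{span}[\mathrm{e}_1,\ldots,\mathrm{e}_d]$, so a right-eigenvector has the form $w^\T = [w_1^\T,0^\T,\ldots,0^\T]$. Using $Cw = w_1$ and $\dot{C}w = w_2 = 0$, the measurement matrix $H = \dot{C} - \Lambda C$ acts as $Hw = -\Lambda w_1$. Detectability requires that $Hw = 0$ together with $A w = \eta w$, $\abs{\eta}\ge 1$, forces $w = 0$; since $\Lambda$ has full rank, $\Lambda w_1 = 0$ gives $w_1 = 0$ and hence $w = 0$. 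This is the single place where the full-rank hypothesis on $\Lambda$ is indispensable.

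Finally, with both conditions established I would apply the cited convergence result to obtain the limiting gain $K_\infty$ and the strict eigenvalue bound, and conclude asymptotic stability. I expect the main obstacle to be bookkeeping rather than depth: one must test stabilisability against the \emph{left}-eigenspace of $A(h)$ while testing detectability against the \emph{right}-eigenspace, and transpose $H$ correctly in the detectability reduction. A secondary point worth stating cleanly is the equivalence $w^\T Q^{1/2}(h) = 0 \iff w^\T Q(h) w = 0$, which is what lets Lemma \ref{lem:process_cov_rank} be applied directly. Because $A(h)$ has its unique eigenvalue exactly \emph{on} the unit circle, both conditions are genuinely required—neither the prior dynamics alone nor the observation alone suffices to pull the closed-loop spectrum strictly inside the disk.
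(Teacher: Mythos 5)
Your proposal is correct and follows essentially the same route as the paper: reduce to the Anderson--Moore steady-state result by verifying complete stabilisability of $[A(h),Q^{1/2}(h)]$ via the left-eigenspace from Lemma \ref{lem:eigspace} together with Lemma \ref{lem:process_cov_rank}, and complete detectability of $[A(h),H]$ via the right-eigenspace and the full rank of $\Lambda$. The only cosmetic difference is that you test $w^\T Q^{1/2}(h)=0$ through the quadratic form $w^\T Q(h)w=0$ and the positive definiteness of the single diagonal block $Q_{q+1,q+1}(h)$, whereas the paper uses $w^\T Q(h)=0$ and the full rank of the blocks in the bottom block-row; both are valid.
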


\begin{proof}
From Eq. \eqref{eq:inference_problem} we have that the Kalman filter operates on the following system
\begin{subequations}
\begin{align}
X_{n+1} &= A(h) X_n + Q^{1/2}(h)W_{n+1}, \\
Z_n &= H X_n,
\end{align}
\end{subequations}
where $H = [-\Lambda, \mathrm{I},0,\ldots,0]$ and $W_n$ are i.i.d. standard Gaussian vectors. It is sufficient to show that $[A(h),H]$ is completely detectable and $[A(h),Q^{1/2}(h)]$ is completely stabilisable
\citep[Chapter 4, page 77]{anderson1979}. We start by showing complete detectability. If we let $w^\T = [w_1^\T,\ldots,w_{q+1}^\T]$, $w_i \in \mathbb{R}^d,\ i=1,2,\ldots,q+1$, then by Lemma \ref{lem:eigspace} we have that $w^\T A^\T(h) = \eta w^\T$ for some $\eta$ implies that either
$w = 0$ or $w^\T = [w_1^\T,0^\T,\ldots,0^\T]$ for some $w_1 \in \mathbb{R}^d$ and $\eta = 1$. Furthermore, $w^\T H^\T = -w_1^\T \Lambda^\T + w_2^\T = 0$ implies that $w_2 = \Lambda w_1$. However, by the previous argument, we have $w_2 = 0$, therefore $0 = \Lambda w_1$ but $\Lambda$ is full rank by assumption so $w_1 = 0$. Therefore, $[A^\T(h),H^\T]$ is completely detectable. As for complete stabilisability, again by Lemma \ref{lem:eigspace}, we have $w^\T A(h) = \eta w^\T$ for some $\eta$, which implies either $w = 0$ or $w^\T = [0^\T,\ldots,0^\T,w_{q+1}^\T]$ and $\eta = 1$. Furthermore, since the nullspace of $Q^{1/2}(h)$ is the same as the nullspace of $Q(h)$, we have that \sloppy{$w^\T Q^{1/2}(h) = 0$} is equivalent to $w^\T Q(h) = 0$, which is given by
\begin{equation*}
w^\T Q(h) = \begin{bmatrix} w_{q+1}^\T Q_{q+1,1}(h) & \ldots & w_{q+1}^\T Q_{q+1,q+1}(h) \end{bmatrix} = 0,
\end{equation*}
but by Lemma \ref{lem:process_cov_rank} the blocks  $Q_{i,j}(h)$ have full rank so $w_{q+1} = 0$ and thus $w = 0$. To conclude, we have that $[A(h),Q^{1/2}(h)]$ is completely stabilisable and $[A(h),H]$ is completely detectable and therefore the Kalman filter is asymptotically stable.\qed
\end{proof}

\begin{corollary}\label{cor:stability_ekfukf}
In the same setting as Theorem \ref{thm:stability}, the EKF and UKF are asymptotically stable.
\end{corollary}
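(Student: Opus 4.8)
The plan is to reduce the corollary directly to Theorem~\ref{thm:stability} by invoking Lemma~\ref{lem:ekfukf_affine}. The key observation is that the vector field of the test equation~\eqref{eq:thmtest}, namely $f(y,t) = \Lambda y$, is affine---indeed linear, corresponding to $\Lambda(t) = \Lambda$ and $\zeta(t) = 0$ in the notation of Eq.~\eqref{eq:affine_vector_field}. Lemma~\ref{lem:ekfukf_affine} was established precisely for this situation: it asserts that on an affine vector field the EKF coincides exactly with the Kalman filter, and that any Gaussian filter whose quadrature rule integrates polynomials of degree up to two exactly with respect to $\mathcal{N}(\mu_n^P,\Sigma_n^P)$ likewise reduces to the exact Kalman update of Eq.~\eqref{eq:affine_gaussian_update}.

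First I would check that the UKF falls under the second clause of Lemma~\ref{lem:ekfukf_affine}. The UKF, as defined in this paper, uses the third degree fully symmetric rule, which by construction integrates all polynomials of degree up to three exactly, and in particular those of degree at most two. Hence the hypothesis of Lemma~\ref{lem:ekfukf_affine} is met and the UKF update reduces to Eq.~\eqref{eq:affine_gaussian_update}. The reduction for the EKF requires no extra assumption, since the first order Taylor expansion of an affine function is exact, so its update already computes the moments in Eq.~\eqref{eq:general_gaussian_update} exactly. Consequently, for the linear ODE~\eqref{eq:thmtest} under an IWP($q$) prior, both the EKF and the UKF produce exactly the same predictive and filtering means as the Kalman filter at every step $n$.

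Having established this step-wise identity, I would conclude by transferring the asymptotic stability. Since the prior, the step size $h$, and all system matrices are identical to those in Theorem~\ref{thm:stability}, the mean recursion driven by the limiting gain $K_\infty$ is the same for all three filters; Theorem~\ref{thm:stability} gives $\lim_{n\to\infty}\mu_n^P = 0$ and hence $\lim_{n\to\infty}\mu_n^F = 0$ for the Kalman filter, and because the EKF and UKF estimates agree with it at every $n$, the same limits hold for them. The main---and essentially only---point to verify carefully is the degree-of-exactness bookkeeping for the UKF, i.e.\ that the third degree rule indeed integrates the at-most-quadratic integrands appearing in Eq.~\eqref{eq:general_gaussian_update} exactly for an affine vector field; once this is in place the corollary is immediate and there is no genuine obstacle.
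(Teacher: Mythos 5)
Your proposal is correct and follows essentially the same route as the paper: invoke Lemma \ref{lem:ekfukf_affine} to reduce the EKF and UKF to the exact Kalman filter on the affine (linear) test equation, then apply Theorem \ref{thm:stability}. Your additional verification that the third degree fully symmetric rule integrates polynomials of degree at most two exactly is a sensible explicit check of the lemma's hypothesis, but otherwise the argument is identical.
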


\begin{proof}
Since the vector field is linear and therefore affine Lemma \ref{lem:ekfukf_affine} implies that EKF and UKF reduce to the exact Kalman filter, which is asymptotically stable by Theorem \ref{thm:stability}.\qed
\end{proof}

It is worthwhile to note that $\Lambda_{\text{test}}$ is of full rank for all $[\lambda_1\ \lambda_2]^\T \in \mathbb{R}^2 \setminus \{ 0 \}$, and consequently Theorem \ref{thm:stability} and Corollary \ref{cor:stability_ekfukf} guarante A-stability for the EKF and UKF in the sense of \citet{Dahlquist1963}\footnote{Some authors require stability on the line $\lambda_1 = 0$ as well \citep{Hairer1996}. Due to the exclusion of origin EKF and UKF cannot be said to be A-stable in this sense.}. Lastly, a peculiar fact about Theorem \ref{thm:stability} is that it makes no reference to the eigenvalues of $\Lambda$ (i.e. the stability properties of the ODE). That is, the Kalman filter will be asymptotically stable even if the underlying ODE is not, provided that, $\Lambda$ is of full rank. This may seem awkward but it is rarely the case that the ODE that we want to integrate is unstable, and even in such a case most solvers will produce an error that grows without a bound as well. Though all of the aforementioned properties are at least partly consequences of using IWP($q$) as a prior and they may thus be altered by changing the prior.  

\section{Uncertainty Calibration}
In practice the model parameters, $(F,u,L)$, might depend on some parameters that need to be estimated for the probabilistic solver to report
appropriate uncertainty in the estimated solution to Eq. \eqref{eq:ode}. The diffusion matrix $L$ is of particular importance as it determines
the gain of the Wiener process entering the system in Eq. \eqref{eq:ode_prior} and thus determines how \emph{'diffuse'} the prior is.
Herein we shall only concern ourselves with estimating $L$, though, one might anticipate future interest in estimating $F$ and $u$ as well. However, let us start with a few words on the monitoring of errors in numerical solvers in general.

\subsection{Monitoring of Errors in Numerical Solvers}
An important aspect of numerical analysis is to monitor the error of a method. While the goal of probabilistic solvers is to do so by calibration of a probabilistic model, the approach of classical numerical analysis is to examine the local and global errors. The global error can be bounded but is typically impractical for monitoring error \cite[Chapter II.3]{hairer87:_solvin_ordin_differ_equat_i}. A more practical approach is to monitor (and control) the accumulation of local errors. This can be done by using two step sizes together with Richardson extrapolation \cite[Theorem 4.1]{hairer87:_solvin_ordin_differ_equat_i}. Though, perhaps more commonly this is done via embedded Runge--Kutta methods \cite[Chapter II.4]{hairer87:_solvin_ordin_differ_equat_i} or the Milne device \cite{Byrne1975}. 

In the context of filters, the relevant object in this regard is the scaled residual $S_n^{-1/2}(z_n - \hat{z}_n)$. Due to its role in the prediction-error decomposition, which is defined below, it directly monitors the calibration of the predictive distribution. \cite{schober2018} showed how to use this quantity to effectively control step sizes in practice. It was also recently shown in \cite[Section 7]{Kersting2018}, that in the case of $q=1$, fixed $\sigma^2$ (amplitude of the Wiener process) and Integrated Wiener Process prior, the posterior standard deviation computed by the solver of \cite{schober2018} contracts at the same rate as the worst-case error as the step size goes to zero---thereby preventing both under- and over-confidence.

In the following we discuss effective strategies for calibrating $L$ when it is given by $L = \sigma \breve{L}$ for fixed $\breve{L}$ thus providing a probabilistic quantification of the error in the proposed solvers.

\subsection{Uncertainty Calibration for Affine Vector Fields}\label{susbec:affine_calibration}
As noted in Section \ref{subsec:affine_vector_fields}, the Kalman filter produces the exact solution to the inference problem in Eq. \eqref{eq:inference_problem} when the vector field is affine. Furthermore, the marginal likelihood $p(z_{1:N})$ can be computed during the execution of the Kalman filter by the prediction error decomposition \citep{Schweppe1965}, which is given by:
\begin{equation}\label{eq:marginal_likelihood}
\begin{split}
p(z_{1:N}) &= p(z_1)\prod_{n=2}^N p(z_n\mid z_{1:n-1}) \\
&= \prod_{n=1}^N \mathcal{N}(z_n; \hat{z}_n,S_n).
\end{split}
\end{equation}
While the marginal likelihood in Eq. \eqref{eq:marginal_likelihood} is certainly straightforward to compute without adding much computational cost, maximising it is a different story in general. In the particular case when the diffusion matrix $L$ and the initial covariance $\Sigma_0$ are given by re-scaling fixed matrices $L = \sigma\breve{L}$ and $\Sigma_0 = \sigma^2\breve{\Sigma}_0$ for some scalar $\sigma > 0$, then uncertainty calibration can be done by a simple post-processing step after running the Kalman filter, as is shown in Proposition \ref{prop:s2mle_exact} below.
\begin{proposition}\label{prop:s2mle_exact}
Let $f(y,t) = \Lambda(t)y + \zeta(t)$, $\Sigma_0 = \sigma^2 \breve{\Sigma}_0$, $L = \sigma \breve{L}$, $R = 0$ and denote the equivalent discrete-time process noise covariance for the prior model $(F,u,\breve{L})$ by $\breve{Q}(h)$. Then the Kalman filter estimate to the solution of
\begin{equation*}
\dot{y}(t) = f(y(t),t)
\end{equation*}
that uses the parameters $(\mu_0^F,\Sigma_0,A(h),\xi(h),Q(h))$ is equal to the Kalman filter estimate that uses the parameters $(\mu_0^F,\breve{\Sigma}_0,A(h),\xi(h),\breve{Q}(h))$. More specifically, if we denote the filter mean and covariance at time $n$ using the former parameters by $(\mu_n^F,\Sigma_n^F)$ and the corresponding filter mean and covariance using the latter parameters by $(\breve{\mu}_n^F,\breve{\Sigma}_n^F)$, then $(\mu_n^F,\Sigma_n^F) = (\breve{\mu}_n^F,\sigma^2\breve{\Sigma}_n^F)$. Additionally, denote the predicted mean and covariance of the measurement $Z_n$ by $\breve{z}_n$ and $\breve{S}_n$, respectively, when using the parameters $(\mu_0^F,\breve{\Sigma}_0,A(h),\xi(h),\breve{Q}(h))$. Then the maximum likelihood estimate of $\sigma^2$, denoted by $\widehat{\sigma^2_N}$, is given by
\begin{equation}\label{eq:s2mle_exact}
\widehat{\sigma^2_N} = \frac{1}{Nd} \sum_{n=1}^N (z_n - \breve{z}_n)^\T \breve{S}_n^{-1} (z_n - \breve{z}_n).
\end{equation}
\end{proposition}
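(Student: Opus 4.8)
The plan is to prove the two assertions of Proposition~\ref{prop:s2mle_exact} in order: first the scaling identity $(\mu_n^F,\Sigma_n^F) = (\breve{\mu}_n^F,\sigma^2\breve{\Sigma}_n^F)$, and then, leveraging it, the closed form for $\widehat{\sigma^2_N}$. As a preliminary step I would record the elementary fact that re-scaling $L = \sigma\breve{L}$ propagates to the discrete-time noise covariance as $Q(h) = \sigma^2\breve{Q}(h)$. This is immediate from the integral definition of $Q(h)$ in Eq.~\eqref{eq:lti_disc}: since $LL^\T = \sigma^2\breve{L}\breve{L}^\T$, the scalar $\sigma^2$ factors out of the integrand, while $A(h)$ and $\xi(h)$ are untouched because they do not involve $L$.

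Next I would establish the scaling identity by induction on $n$. The base case $n=0$ holds by hypothesis, since the two filters share the initial mean and $\Sigma_0 = \sigma^2\breve{\Sigma}_0$. For the inductive step I would first pass the hypothesis through the prediction equations \eqref{eq:kalman_prediction}: because $A(h)$ and $\xi(h)$ coincide, the predictive means agree, $\mu_{n+1}^P = \breve{\mu}_{n+1}^P$, while $\Sigma_{n+1}^P = \sigma^2\breve{\Sigma}_{n+1}^P$, the factor being shared by both the propagated covariance and $Q(h)$. Substituting into the affine update \eqref{eq:affine_gaussian_update}, the decisive observation is that $\sigma^2$ cancels in the gain: since $S_n = \sigma^2\breve{S}_n$ and $\Sigma_n^P H_n^\T = \sigma^2\breve{\Sigma}_n^P H_n^\T$, one gets $K_n = \breve{K}_n$, \emph{independent} of $\sigma^2$. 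With identical gain and identical innovation $\zeta(t_n) - H_n\mu_n^P$, the updated means coincide, whereas every term of the covariance update carries $\sigma^2$, giving $\Sigma_n^F = \sigma^2\breve{\Sigma}_n^F$. This closes the induction and, as a by-product, yields the two facts needed below, namely $\hat{z}_n = \breve{z}_n$ and $S_n = \sigma^2\breve{S}_n$.

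For the maximum-likelihood estimate I would substitute $\hat{z}_n = \breve{z}_n$ and $S_n = \sigma^2\breve{S}_n$ into the marginal likelihood \eqref{eq:marginal_likelihood}. Writing out each $d$-dimensional Gaussian factor and taking logarithms, the dependence on $\sigma^2$ collapses to a log-determinant term $-\tfrac{Nd}{2}\log\sigma^2$ (using $\det(\sigma^2\breve{S}_n) = \sigma^{2d}\det\breve{S}_n$) and a quadratic term $-\tfrac{1}{2\sigma^2}\sum_{n=1}^N (z_n-\breve{z}_n)^\T\breve{S}_n^{-1}(z_n-\breve{z}_n)$, all remaining terms being constant in $\sigma^2$. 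Differentiating with respect to $\sigma^2$, setting the derivative to zero, and solving the resulting scalar equation produces Eq.~\eqref{eq:s2mle_exact}; I would also note that the second derivative is negative at the stationary point, confirming a maximum.

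As for the main obstacle, none of the steps is computationally heavy, so the difficulty is conceptual rather than technical. The one point to handle with care is the cancellation of $\sigma^2$ in the Kalman gain, which is precisely what forces the filtering \emph{means} to be identical while only the covariances rescale; carrying this cancellation correctly through the induction is the crux. The only other place demanding attention is the factor $d$ in $\det(\sigma^2\breve{S}_n) = \sigma^{2d}\det\breve{S}_n$, as it is exactly this dimensional factor that produces the $Nd$ normalisation in Eq.~\eqref{eq:s2mle_exact}.
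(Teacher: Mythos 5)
Your proposal is correct and follows essentially the same route as the paper's proof in Appendix C: factoring $\sigma^2$ out of $Q(h)$, an induction through the prediction and affine update equations in which $\sigma^2$ cancels in the gain so that means coincide and covariances rescale, and then maximising the resulting log-likelihood whose $\sigma^2$-dependence reduces to $-\tfrac{Nd}{2}\log\sigma^2$ plus a quadratic term. The only addition beyond the paper's argument is your (correct) remark verifying the second-order condition at the stationary point.
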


Proposition \ref{prop:s2mle_exact} is just an amalgamation of statements from \citet{Tronarp2019a}. Nevertheless, we provide an accessible proof in Appendix \ref{appendix:proof4}.

\subsection{Uncertainty Calibration for Non-Affine Vector Fields}\label{susbec:non_affine_calibration}
For non-affine vector fields the issue of parameter estimation becomes more complicated. The Bayesian filtering problem is not solved exactly and consequently any marginal likelihood will be approximate as well. Nonetheless, a common approach in the Gaussian filtering framework is to approximate the marginal likelihood in the same manner as the filtering solution is approximated \citep[Chapter 12.3.3]{Sarkka2013},  that is:
\begin{equation}\label{eq:marginal_likelihood_approximation}
p(z_{1:N}) \approx \prod_{n=1}^N \mathcal{N}(z_n; \hat{z}_n,S_n),
\end{equation}
where $\hat{z}_n$ and $S_n$ are the quantities in Eq. \eqref{eq:general_gaussian_update} approximated by some method (e.g. EKF). Maximising Eq. \eqref{eq:marginal_likelihood_approximation} is a common approach in signal processing \citep{Sarkka2013} and referred to as \emph{quasi maximum likelihood} in time series literature \citep{Lindstrom2015}. Both Eq. \eqref{eq:schober} and Eq. \eqref{eq:ekf} can be thought of as Kalman updates for the case where the vector field is approximated by a piece-wise affine function, without modifying $\Sigma_0$, $Q(h)$, and $R$. For instance the affine approximation of the vector field due to the EKF on the discretisation interval $[t_n,t_{n+1})$ is given by
\begin{subequations}
\begin{align}
\hat{\zeta}_n(t) &= f\big(C\mu_n^P,t_n\big) - J_f\big(C\mu_n^P,t_n\big)C\mu_n^P, \\
\hat{\Lambda}_n(t) &= J_f\big(C\mu_n^P,t_n\big),\\
\hat{f}_n(y,t) &= \hat{\Lambda}_n(t)y + \hat{\zeta}_n(t).
\end{align}
\end{subequations}
While the vector field is approximated by a piece-wise affine function, the discrete-time filtering problem Eq. \eqref{eq:inference_problem} is still simply an affine problem, without modifications of $\Sigma_0$, $Q(h)$, and $R$. Therefore, the results of Proposition \ref{prop:s2mle_exact} still apply and the $\sigma^2$ maximising the approximate marginal likelihood in Eq. \eqref{eq:marginal_likelihood_approximation} can be computed in the same manner as in Eq. \eqref{eq:s2mle_exact}.

On the other hand, it is clear that dependence on $\sigma^2$ in Eq. \eqref{eq:general_gaussian_update} is non-trivial in general, which is also true for the quadrature approaches of Section \ref{subsec:quadrature}. Therefore, maximising Eq. \eqref{eq:marginal_likelihood_approximation} for the quadrature approaches is not as straightforward. However, by Taylor series expanding the vector field in Eq. \eqref{eq:general_gaussian_update} one can see that the numerical integration approaches are roughly equal to the Taylor series approaches provided that $\breve{\Sigma}_n^P$ is small. Therefore, we opt for plugging in the corresponding quantities from the quadrature approximations into Eq. \eqref{eq:s2mle_exact} in order to achieve computationally cheap calibration of these approaches.

\begin{remark}
A local calibration method for $\sigma^2$ is given by \cite[Eq. (45)]{schober2018}, which in fact corresponds to an $h$-dependent prior, with the diffusion matrix in Eq. \eqref{eq:ode_prior} $L = L(t)$ being piece-wise constant over integration steps. Moreover, \citet{schober2018} had to neglect the dependence of $\Sigma_n^P$ on the likelihood. Here we prefer the estimator given in Eq. \eqref{eq:s2mle_exact} since it is attempting to maximise the likelihood from the globally defined probability model in Eq. \eqref{eq:inference_problem}, and it succeeds for affine vector fields.
\end{remark}

More advanced methods for calibrating the parameters of the prior can be developed by combining the Gaussian smoothing equations \citep[Chapter 10]{Sarkka2013} with the expectation maximisation method \citep{Kokkala2014} or variational Bayes \citep{Taniguchi2017}.

\subsection{Uncertainty Calibration of Particle Filters}
If calibration of Gaussian filters was complicated by having a non-affine vector field, the situation for particle filters is even more challenging. There is, to the authors' knowledge, no simple estimator of the scale of the Wiener process (such as Proposition \ref{prop:s2mle_exact}) even for the case of affine vector fields. However, the literature on parameter estimation using particle methods is vast so we proceed to point the reader towards some alternatives. In the class of off-line methods, \citet{Schon2011} uses a particle smoother to implement an expectation maximisation algorithm, while \citet{Lindsten2013} uses a particle Markov chain Monte Carlo methods to implement a stochastic approximation expectation maximisation algorithm. One can also use the iterated filtering method of \citet{Ionides2011} to get a maximum likelihood estimator, or particle Markov chain Monte Carlo \citep{Andrieu2010}. 

On the other hand, if on-line calibration is required then the gradient based recursive maximum likelihood estimator by \citet{Doucet2003} can be used, or the on-line version of iterated filtering by \citet{Lindstrom2012}. Furthermore, \citet{Storvik2002} provides an alternative for on-line calibration when sufficient statics of the parameters are finite dimensional and can be computed recursively in $n$. An overview on parameter estimation using particle filters was also given by \citet{Kantas2009}.

\section{Experimental Results}\label{sec:experiments}
In this section we evaluate the different solvers presented in this paper in different scenarios. Though before we proceed to the experiments we define some summary metrics with which assessments of accuracy and uncertainty quantification can be made. The root mean square error (RMSE) is often used to assess accuracy of filtering algorithms and is defined by
\begin{equation*}
\mathrm{RMSE} = \sqrt{\frac{1}{N} \sum_{n=1}^N \norm{y(nh) - C \mu^F_n}^2 }.
\end{equation*}
In fact $y(nh) - C \mu^F_n$ is precisely the \emph{global error} at time $t_n$ \citep[Eq. (3.16)]{hairer87:_solvin_ordin_differ_equat_i}.
As for assessing the uncertainty quantification, the $\chi^2$-statistics is commonly used \citep{BarShalom2004}. That is, in a linear Gaussian model the following quantities 
\begin{equation*}
\big(y(nh) - C\mu_n^F\big)^\T [C \Sigma^F_n C^\T]^{-1} \big(y(nh) - C\mu_n^F\big), \ n=1,\ldots,N,
\end{equation*}
are i.i.d. $\chi^2(d)$. For a trajectory summary we define the average $ \chi^2$-statistics as
\begin{equation*}
\bar{\chi^2} = \frac{1}{N}\sum_{n=1}^N \big(y(nh) - C\mu_n^F\big)^\T [C \Sigma^F_n C^\T]^{-1} \big(y(nh) - C\mu_n^F\big).
\end{equation*}
For an accurate and well calibrated model the RMSE is small and $\bar{\chi^2} \approx d$. In the succeeding discussion we shall refer to a method producing $\bar{\chi^2} < d$ or $\bar{\chi^2} > d$ as \emph{underconfident} or \emph{overconfident}, respectively.

\subsection{Linear Systems}\label{subsec:experiment:linear}
In this experiment we consider a linear system given by
\begin{subequations}
\begin{align}
\Lambda &= \begin{bmatrix} \lambda_1 & -\lambda_2 \\ \lambda_2 & \lambda_1 \end{bmatrix},\\
\dot{y}(t) &= \Lambda y(t), \quad y(0) = \mathrm{e}_1.
\end{align}
\end{subequations}
This makes for a good test model as the inference problem in Eq. \eqref{eq:inference_problem} can be solved exactly, and consequently its adequacy can be assessed. We compare exact inference by the Kalman filter (KF)\footnote{Again note that the EKF and appropriate numerical quadrature methods are equivalent to this estimator here (see Lemma \ref{lem:ekfukf_affine}).}  (see Section \ref{subsec:affine_vector_fields}) with the approximation due to \citet{schober2018} (SCH) (see Proposition \ref{prop:schober}) and the covariance approximation due to \citet{Kersting2016UAI} (KER) (see Proposition \ref{prop:kersting}). The integration interval is set to $[0,10]$ and all methods use an IWP($q$) prior for $q=1,2,\ldots,6$, and the initial mean is set to $\E[X^{(j)}(0)] = \Lambda^{j-1}y(0)$ for $j=1,\ldots,q+1$, with variance set to zero (exact initialisation). The uncertainty of the methods is calibrated by the maximum likelihood method (see Proposition \ref{prop:s2mle_exact}), and the methods are examined for 10 step sizes uniformly placed on the interval $[10^{-3},10^{-1}]$.

We examine the parameters $\lambda_1 = 0$ and $\lambda_2 = \pi$ (half a revolution per unit of time with no damping). The RMSE is plotted against step size in Figure \ref{fig:experiment_linear1:rmse}. It can be seen that SCH is a slightly better than KF and KER for $q = 1$ and small step sizes, and KF becomes slightly better than SCH for large step size while KER becomes significantly worse than both KF and SCH.
For $q > 1$, it can be seen that the RMSE is significantly lower for KF than for SCH/KER in general with performance differing between one and two orders of magnitude. Particularly, the superior stability properties of KF are demonstrated (see Theorem \ref{thm:stability}) for $q > 3$ where both SCH and KER produce massive errors for larger step sizes.

Furthermore, the average $\chi^2$-statistic is shown in Figure \ref{fig:experiment_linear1:chi2}. All methods appear to be overconfident for $q=1$ with SCH performing best, followed by KER. On the other hand, for $1 < q < 5$, SCH and KER remain overconfident for the most part, while KF is underconfident. Our experiments also show that unsurprisingly all methods perform better for smaller $\lvert\lambda_2\rvert$ (frequency of the oscillation). However, we omit visualising this here.

Finally, a demonstration of the error trajectory for the first component of $y$ and the reported uncertainty of the solvers is shown in Figure \ref{fig:demonstration_linear1} for $h = 10^{-2}$ and $q = 2$. Here it can be seen that all methods produce similar errors bars, though SCH and KER produce errors that oscillate far outside their reported uncertainties.

\begin{figure}[t!]
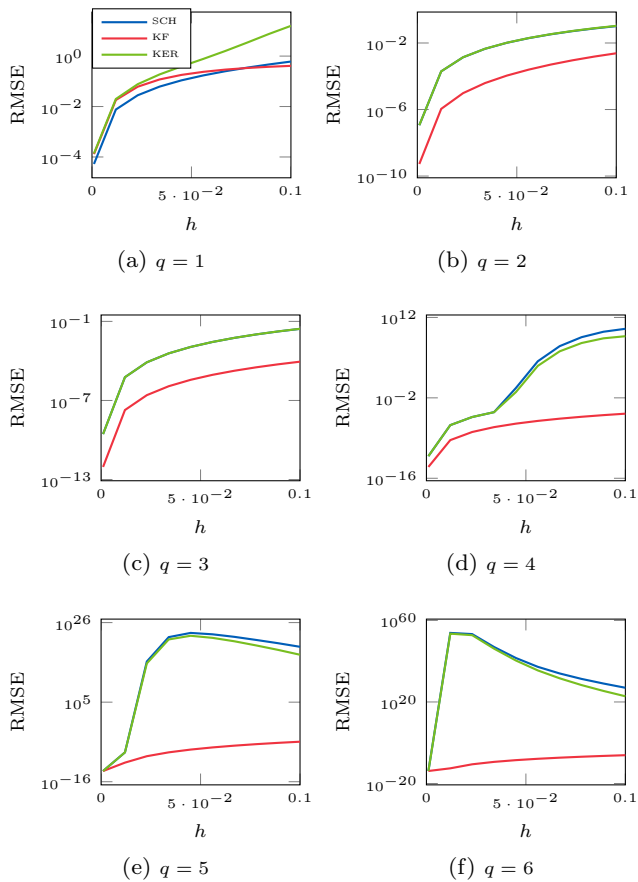

\centering
\subfloat[\scriptsize{$q = 1$}\label{fig:experiment_linear1_rmse_q1}]{\input{fig/experiment_linear1/experiment_linear1_rmse_q1.tex}}
\subfloat[\scriptsize{$q = 2$}\label{fig:experiment_linear1_rmse_q2}]{\input{fig/experiment_linear1/experiment_linear1_rmse_q2.tex}}\\
\subfloat[\scriptsize{$q = 3$}\label{fig:experiment_linear1_rmse_q3}]{\input{fig/experiment_linear1/experiment_linear1_rmse_q3.tex}}
\subfloat[\scriptsize{$q = 4$}\label{fig:experiment_linear1_rmse_q4}]{\input{fig/experiment_linear1/experiment_linear1_rmse_q4.tex}}\\
\subfloat[\scriptsize{$q = 5$}\label{fig:experiment_linear1_rmse_q5}]{\input{fig/experiment_linear1/experiment_linear1_rmse_q5.tex}}
\subfloat[\scriptsize{$q = 6$}\label{fig:experiment_linear1_rmse_q6}]{\input{fig/experiment_linear1/experiment_linear1_rmse_q6.tex}}
\caption{RMSE of KF, SCH, and KER on the undamped oscillator using IWP($q$) priors for $q=1,\ldots,6$ plotted against step size.}\label{fig:experiment_linear1:rmse}
\end{figure}

\begin{figure}[t!]
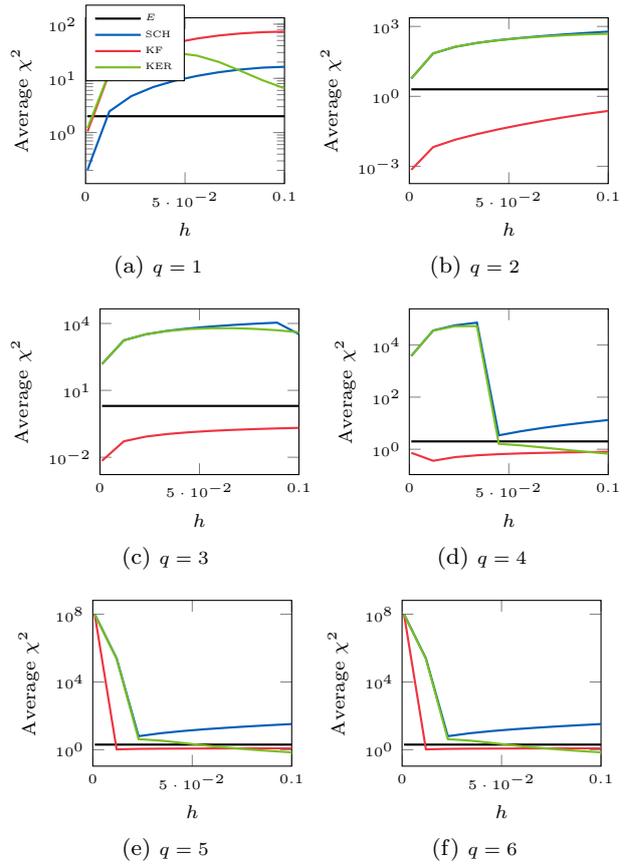

\centering
\subfloat[\scriptsize{$q = 1$}\label{fig:experiment_linear1_chi2_q1}]{\input{fig/experiment_linear1/experiment_linear1_chi2_q1.tex}}
\subfloat[\scriptsize{$q = 2$}\label{fig:experiment_linear1_chi2_q2}]{\input{fig/experiment_linear1/experiment_linear1_chi2_q2.tex}}\\
\subfloat[\scriptsize{$q = 3$}\label{fig:experiment_linear1_chi2_q3}]{\input{fig/experiment_linear1/experiment_linear1_chi2_q3.tex}}
\subfloat[\scriptsize{$q = 4$}\label{fig:experiment_linear1_chi2_q4}]{\input{fig/experiment_linear1/experiment_linear1_chi2_q4.tex}}\\
\subfloat[\scriptsize{$q = 5$}\label{fig:experiment_linear1_chi2_q5}]{\input{fig/experiment_linear1/experiment_linear1_chi2_q5.tex}}
\subfloat[\scriptsize{$q = 6$}\label{fig:experiment_linear1_chi2_q6}]{\input{fig/experiment_linear1/experiment_linear1_chi2_q6.tex}}
\caption{Average $\chi^2$-statistic of KF, SCH, and KER on the undamped oscillator using IWP($q$) priors for $q=1,\ldots,6$ plotted against step size. The expected $\chi^2$-statistic is shown in black (E).}\label{fig:experiment_linear1:chi2}
\end{figure}

\begin{figure}[t!]
\centering
\input{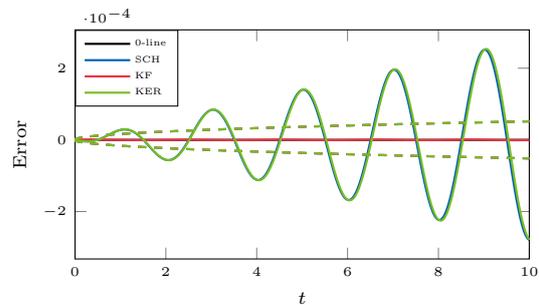}
\caption{The errors (solid lines) and $\pm$ 2 standard deviation bands (dashed) for KF, SCH, and KER on the undamped oscillator with $q=2$ and $h = 10^{-2}$. A line at 0 is plotted in solid black.}\label{fig:demonstration_linear1}
\end{figure}

\subsection{The Logistic Equation}\label{subsec:experiment:logistic}
In this experiment the logistic equation is considered:
\begin{equation}\label{eq:experiment:logistic}
\dot{y}(t) = r y(t) \left(1 - y(t)\right), \quad y(0) = 1\cdot 10^{-1},
\end{equation}
which has the solution:
\begin{equation}
y(t) = \frac{\exp(rt)}{1/y_0 -1 + \exp(rt)}.
\end{equation}
In the experiments $r$ is set to $r = 3$. We compare the zeroth order solver (Proposition \ref{prop:schober}) \citep{schober2018} (SCH), the first order solver in Eq. \eqref{eq:ekf} (EKF), a numerical integration solver based on the covariance approximation in Proposition \ref{prop:kersting} \citep{Kersting2016UAI} (KER), and a numerical integration solver based on approximating Eq. \eqref{eq:general_gaussian_update} (UKF). Both numerical integration approaches use a third degree fully symmetric rule \citep[see][]{McNameeStenger1967}. The integration interval is set to $[0,2.5]$ and all methods use an IWP($q$) prior for $q=1,2,\ldots,4$, and the initial mean of $X^{(1)}$, $X^{(2)}$, and $X^{(3)}$ are set to $y(0)$, $f(y(0))$, and $J_f(y(0))f(y(0))$, respectively (correct values), with zero covariance. The remaining state components $X^{(j)}, j>3$ are set to zero mean with unit variance. The uncertainty of the methods is calibrated by the quasi maximum likelihood method as explained in Section \ref{susbec:non_affine_calibration}, and the methods are examined for 10 step sizes uniformly placed on the interval $[10^{-3},10^{-1}]$.

The RMSE is plotted against step size in Figure \ref{fig:experiment_logistic1:rmse}. It can be seen that EKF and UKF tend to produce smaller errors by more than an order of magnitude than SCH and KER in general, with the notable exception of the UKF behaving badly for small step sizes and $q = 4$. This is probably due to numerical issues for generating the integration nodes, which requires the computation of matrix square roots \citep{Julier2000} that can become inaccurate for ill-conditioned matrices. Additionally, the average $\chi^2$-statistic is plotted against step size in Figure \ref{fig:experiment_logistic1:chi2}. Here it appears that all methods tend to be underconfident for $q=1,2$, while SCH becomes overconfident for $q=3,4$.

A demonstration of the error trajectory and the reported uncertainty of the solvers is shown in Figure \ref{fig:demonstration_linear1} for $h = 10^{-1}$ and $q = 2$. SCH and KER produce similar errors and they are hard to discern in the figure. The same goes for EKF and UKF. Additionally, it can be seen that the solvers produce qualitatively different uncertainty estimates. While the uncertainty of EKF and UKF first grows to then shrink as the the solution approaches the fixed point at $y(t) = 1$, the uncertainty of SCH grows over the entire interval with the uncertainty of KER growing even faster.

\begin{figure}[t!]
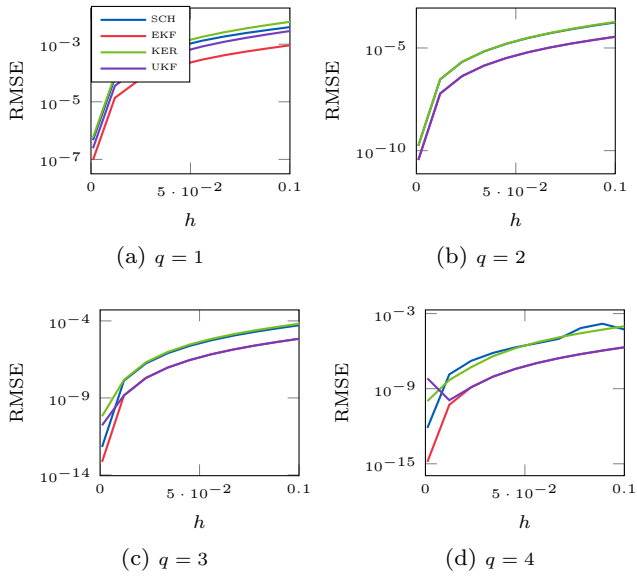

\centering
\subfloat[\scriptsize{$q = 1$}\label{fig:experiment_logistic1_rmse_q1}]{\input{fig/experiment_logistic1/experiment_logistic1_rmse_q1.tex}}
\subfloat[\scriptsize{$q = 2$}\label{fig:experiment_logistic1_rmse_q2}]{\input{fig/experiment_logistic1/experiment_logistic1_rmse_q2.tex}}\\
\subfloat[\scriptsize{$q = 3$}\label{fig:experiment_logistic1_rmse_q3}]{\input{fig/experiment_logistic1/experiment_logistic1_rmse_q3.tex}}
\subfloat[\scriptsize{$q = 4$}\label{fig:experiment_logistic1_rmse_q4}]{\input{fig/experiment_logistic1/experiment_logistic1_rmse_q4.tex}}
\caption{RMSE of SCH, EKF, KER, and UKF on the logistic equation using IWP($q$) priors for $q=1,\ldots,4$ plotted against step size.}\label{fig:experiment_logistic1:rmse}
\end{figure}

\begin{figure}[t!]
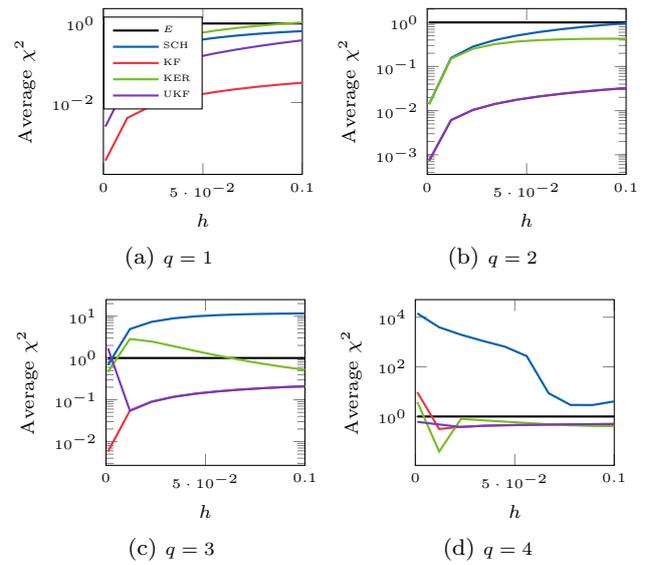

\centering
\subfloat[\scriptsize{$q = 1$}\label{fig:experiment_logistic1_chi2_q1}]{\input{fig/experiment_logistic1/experiment_logistic1_chi2_q1.tex}}
\subfloat[\scriptsize{$q = 2$}\label{fig:experiment_logistic1_chi2_q2}]{\input{fig/experiment_logistic1/experiment_logistic1_chi2_q2.tex}}\\
\subfloat[\scriptsize{$q = 3$}\label{fig:experiment_logistic1_chi2_q3}]{\input{fig/experiment_logistic1/experiment_logistic1_chi2_q3.tex}}
\subfloat[\scriptsize{$q = 4$}\label{fig:experiment_logistic1_chi2_q4}]{\input{fig/experiment_logistic1/experiment_logistic1_chi2_q4.tex}}
\caption{Average $\chi^2$-statistic of SCH, EKF, KER, and UKF on the logistic equation using IWP($q$) priors for $q=1,\ldots,4$ plotted against step size. The expected $\chi^2$-statistic is shown in black (E).}\label{fig:experiment_logistic1:chi2}
\end{figure}

\begin{figure}[t!]
\centering
\input{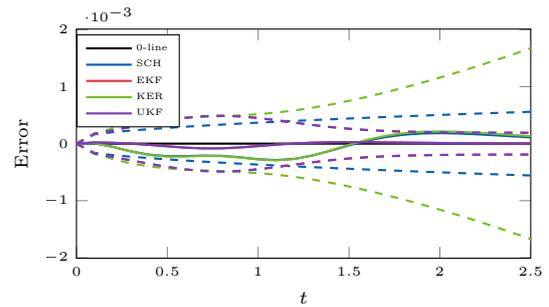}
\caption{The errors (solid lines) and $\pm$ 2 standard deviation bands (dashed) for KF, SCH, and KER on the logistic with $q=2$ and $h = 10^{-1}$. A line at 0 is plotted in solid black.}\label{fig:demonstration_logistic1}
\end{figure}

\subsection{The FitzHugh–-Nagumo Model}
The FitzHugh--Nagumo model is given by:
\begin{equation}
\begin{bmatrix} \dot{y}_1(t) \\ \dot{y}_2(t) \end{bmatrix} = \begin{bmatrix} c\left(y_1(t) - \frac{y_1^3(t)}{3} + y_2(t) \right)\\ - \frac{1}{c}\left(y_1(t) - a + by_2(t) \right) \end{bmatrix},
\end{equation}
where we set $(a,b,c) = (.2,.2,3)$ and  $y(0) = [-1\ 1]^\T$. As previous experiments showed that the behaviour of KER and UKF are similar to SCH and EKF, respectively, we opt for only comparing the latter to increase readability of the presented results. As previously, the moments of $X^{(1)}(0)$, $X^{(2)}(0)$, and $X^{(3)}(0)$ are initialised to their exact values and the remaining derivatives are initialised with zero mean and unit variance. The integration interval is set to $[0,20]$ and all methods use an IWP($q$) prior for $q=1,\ldots,4$ and  the uncertainty is calibrated as explained in Section \ref{susbec:non_affine_calibration}. A baseline solution is computed using MATLAB's \texttt{ode45} function with an absolute tolerance of $10^{-15}$ and relative tolerance of $10^{-12}$, all errors are computed under the assumption that \texttt{ode45} provides the exact solution. The methods are examined for 10 step sizes uniformly placed on the interval $[10^{-3},10^{-1}]$.

The RMSE is shown in Figure \ref{fig:experiment_fitzhugh1:rmse}. For $q=1$ EKF produces an error orders of magnitude larger than SCH and for $q = 2$ both methods produce similar errors until the step size grows too large, causing SCH to start producing orders of magnitude larger errors than EKF. For $q=3,4$ EKF is superior in producing lower errors and additionally SCH can be seen to become unstable for larger step-sizes (at $h \approx 5\cdot 10^{-2}$ for $q=3$ and at $h \approx 2\cdot 10^{-2}$ for $q=4$). Furthermore, the averaged $\chi^2$-statistic is shown in Figure \ref{fig:experiment_fitzhugh1:chi2}. It can be seen that EKF is overconfident for $q=1$ while SCH is underconfident. For $q=2$ both methods are underconfident while EKF remains underconfident for $q=3,4$ but SCH becomes overconfident for almost all step sizes.

The error trajectory for the first component of $y$ and the reported uncertainty of the solvers is shown in Figure \ref{fig:demonstration_fitzhugh1} for $h = 5\cdot10^{-2}$ and $q = 2$. It can be seen that both methods have periodically occurring spikes in their errors with EKF being larger in magnitude but also briefer. However, the uncertainty estimate of the EKF is also spiking at the same time giving an adequate assessments of its error. On the other hand, the uncertainty estimate of SCH grows slowly and monotonically over the integration interval, with the error estimate going outside the two standard deviation region at the first spike (slightly hard to see in the figure).

\begin{figure}[t!]
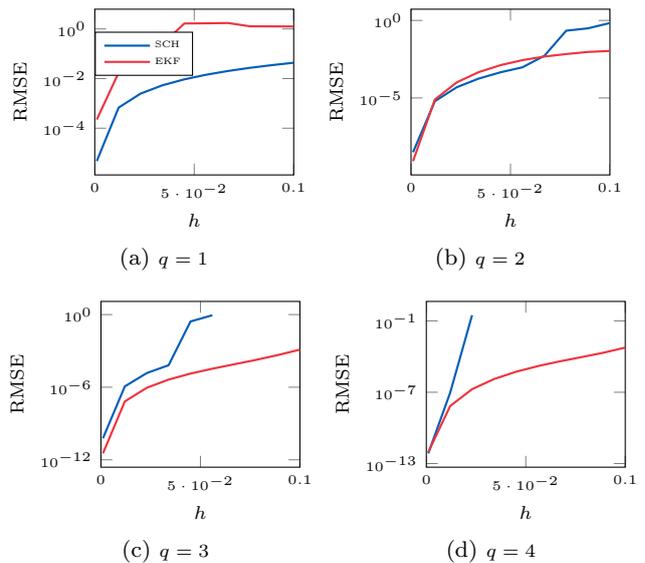

\centering
\subfloat[\scriptsize{$q = 1$}\label{fig:experiment_fitzhugh1_rmse_q1}]{\input{fig/experiment_fitzhugh1/experiment_fitzhugh1_rmse_q1.tex}}
\subfloat[\scriptsize{$q = 2$}\label{fig:experiment_fitzhugh1_rmse_q2}]{\input{fig/experiment_fitzhugh1/experiment_fitzhugh1_rmse_q2.tex}}\\
\subfloat[\scriptsize{$q = 3$}\label{fig:experiment_fitzhugh1_rmse_q3}]{\input{fig/experiment_fitzhugh1/experiment_fitzhugh1_rmse_q3.tex}}
\subfloat[\scriptsize{$q = 4$}\label{fig:experiment_fitzhugh1_rmse_q4}]{\input{fig/experiment_fitzhugh1/experiment_fitzhugh1_rmse_q4.tex}}
\caption{RMSE of SCH and EKF on the FitzHugh--Nagumo model using IWP($q$) priors for $q=1,\ldots,4$ plotted against step size.}\label{fig:experiment_fitzhugh1:rmse}
\end{figure}

\begin{figure}[t!]
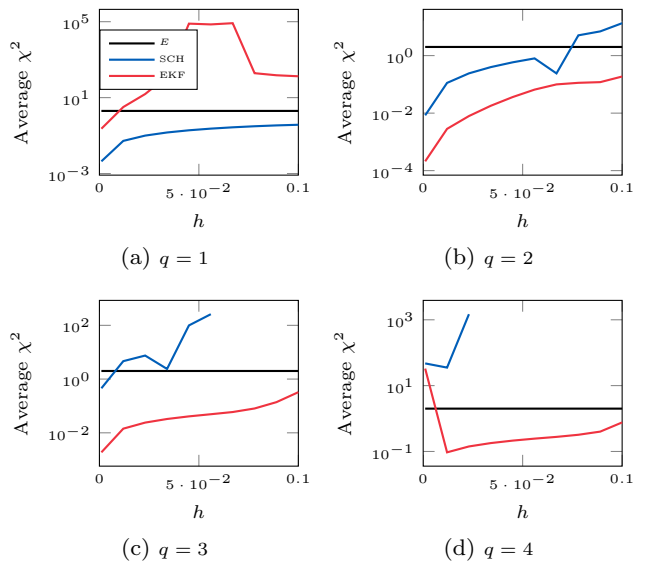

\centering
\subfloat[\scriptsize{$q = 1$}\label{fig:experiment_fitzhugh1_chi2_q1}]{\input{fig/experiment_fitzhugh1/experiment_fitzhugh1_chi2_q1.tex}}
\subfloat[\scriptsize{$q = 2$}\label{fig:experiment_fitzhugh1_chi2_q2}]{\input{fig/experiment_fitzhugh1/experiment_fitzhugh1_chi2_q2.tex}}\\
\subfloat[\scriptsize{$q = 3$}\label{fig:experiment_fitzhugh1_chi2_q3}]{\input{fig/experiment_fitzhugh1/experiment_fitzhugh1_chi2_q3.tex}}
\subfloat[\scriptsize{$q = 4$}\label{fig:experiment_fitzhugh1_chi2_q4}]{\input{fig/experiment_fitzhugh1/experiment_fitzhugh1_chi2_q4.tex}}
\caption{Average $\chi^2$-statistic of SCH and EKF on the FitzHugh--Nagumo model using IWP($q$) priors for $q=1,\ldots,4$ plotted against step size.}\label{fig:experiment_fitzhugh1:chi2}
\end{figure}

\begin{figure}[t!]
\centering
\input{fig/demonstration_fitzhugh1/demonstration_fitzhugh1_q2.tex}
\caption{The errors (solid lines) and $\pm$ 2 standard deviation bands (dashed) for KF, SCH, and KER on the FitzHugh--Nagumo model with $q=2$ and $h = 5\cdot10^{-2}$. A line at 0 is plotted in solid black.}\label{fig:demonstration_fitzhugh1}
\end{figure}

\subsection{A Bernoulli Equation}
In this following experiment we consider a transformation of Eq. \eqref{eq:experiment:logistic}, $\eta(t) = \sqrt{y(t)}$, for $r = 2$.
The resulting ODE for $\eta(t)$ now has two stable equilibrium points $\eta(t) = \pm 1$ and an unstable equilibrium point at $\eta(t) = 0$.
This makes it a simple test domain for different sampling-based ODE solvers, because different types of posteriors ought to arise.
We compare the proposed particle filter using both the proposal Eq. \eqref{eq:schober_proposal} (PF(1)) and EKF proposals
(Eq. \eqref{eq:ekf_proposal}) (PF(2)) with the method by \citep{o.13:_bayes_uncer_quant_differ_equat} (CHK) and the one by
\citep{conrad_probability_2017} (CON) for estimating $\eta(t)$ on the interval $t\in [0,5]$ with initial condition
set to $\eta_0 = 0$. Both PF and CHK use and IWP($q$) prior and set $R = \kappa h^{2q+1}$.
CON uses a Runge--Kutta method of order $q$ with perturbation variance $h^{2q+1}/[2q(q!)^2]$ as to
roughly match the incremental variance of the noise entering PF(1), PF(2), and CHK, which is determined by $Q(h)$ and not $R$.

First we attempt to estimate $y(5)=0$ for 10 step sizes uniformly placed on the interval $[10^{-3},10^{-1}]$ with $\kappa = 1$ and $\kappa = 10^{-10}$.
All methods use 1000 samples/particles and they estimate $y(5)$ by taking the mean over samples/empirical measures.
The estimate of $y(5)$ is plotted against the step size in Figure \ref{fig:experiment_bernoulli_mean1}. In general, the error increases with the step size
for all methods, though most easily discerned in Figures \ref{fig:experiment_bernoulli_mean1_q2_R0} and \ref{fig:experiment_bernoulli_mean1_q2_R1}.
All in all it appears that CHK, PF(1), and PF(2) behave similarly with regards to the estimation, while CON appears to produce a bit larger errors.
Furthermore, the effect of $\kappa$ appears to be the greatest on PF(1) and PF(2) as best illustrated in Figure \ref{fig:experiment_bernoulli_mean1_q1_R1}.

Additionally, kernel density estimates for the different methods are made for time points $t=1,3,5$ for $\kappa=1$,
$q=1,2$ and $h=10^{-1},5\cdot 10^{-2}$. In Figure \ref{fig:experiment_bernoulli_kde_h0_R0} kernel density estimates for $h = 10^{-1}$ are shown.
At $t = 1$ all methods produce fairly concentrated unimodal densities that then disperse as time goes on, with CON being a least concentrated
and dispersing quicker followed by PF(1)/PF(2) and then last CHK. Furthermore, CON goes bimodal as time goes on,
which is best seen in for $q=1$ in Figure \ref{fig:experiment_bernoulli_kde_t5_q1_h0_R0}.
On the other hand, the alternatives vary between unimodal (CHK in \ref{fig:experiment_bernoulli_kde_t5_q2_h0_R0}, also to some degree
PF(1) and PF(2)), bimodal (PF(1) and CHK in Figure \ref{fig:experiment_bernoulli_kde_t5_q1_h0_R0}), and even mildly trimodal
(PF(2) in Figure \ref{fig:experiment_bernoulli_kde_t5_q1_h0_R0}).

Similar behaviour of the methods is observed for $h = 5 \cdot 10^{-2}$ in Figure \ref{fig:experiment_bernoulli_kde_h0_R0},
though here all methods are generally more concentrated.

\begin{figure}[t!]
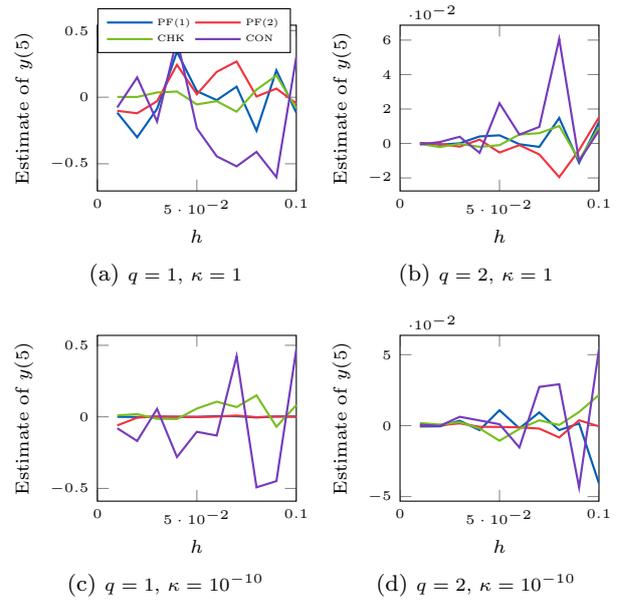

\centering
\subfloat[\scriptsize{$q=1$, $\kappa=1$}\label{fig:experiment_bernoulli_mean1_q1_R0}]{\input{fig/experiment_bernoulli_mean1/experiment_bernoulli_mean1_q1_R0.tex}}
\subfloat[\scriptsize{$q=2$, $\kappa=1$}\label{fig:experiment_bernoulli_mean1_q2_R0}]{\input{fig/experiment_bernoulli_mean1/experiment_bernoulli_mean1_q2_R0.tex}}\\
\subfloat[\scriptsize{$q=1$, $\kappa=10^{-10}$}\label{fig:experiment_bernoulli_mean1_q1_R1}]{\input{fig/experiment_bernoulli_mean1/experiment_bernoulli_mean1_q1_R1.tex}}
\subfloat[\scriptsize{$q=2$, $\kappa=10^{-10}$}\label{fig:experiment_bernoulli_mean1_q2_R1}]{\input{fig/experiment_bernoulli_mean1/experiment_bernoulli_mean1_q2_R1.tex}}
\caption{Sample mean estimate of the solution at $T = 5$. }\label{fig:experiment_bernoulli_mean1}
\end{figure}

\begin{figure}[t!]
\centering
\subfloat[\scriptsize{$q=1$, $t=1$}\label{fig:experiment_bernoulli_kde_t1_q1_h0_R0}]{\input{fig/experiment_bernoulli_kde_h0_R0/experiment_bernoulli_kde1_t1_q1_h0_R0.tex}}
\subfloat[\scriptsize{$q=2$, $t=1$}\label{fig:experiment_bernoulli_kde_t1_q2_h0_R0}]{\input{fig/experiment_bernoulli_kde_h0_R0/experiment_bernoulli_kde1_t1_q2_h0_R0.tex}}\\
\subfloat[\scriptsize{$q=1$, $t=3$}\label{fig:experiment_bernoulli_kde_t3_q1_h0_R0}]{\input{fig/experiment_bernoulli_kde_h0_R0/experiment_bernoulli_kde1_t3_q1_h0_R0.tex}}
\subfloat[\scriptsize{$q=2$, $t=3$}\label{fig:experiment_bernoulli_kde_t3_q2_h0_R0}]{\input{fig/experiment_bernoulli_kde_h0_R0/experiment_bernoulli_kde1_t3_q2_h0_R0.tex}}\\
\subfloat[\scriptsize{$q=1$, $t=5$}\label{fig:experiment_bernoulli_kde_t5_q1_h0_R0}]{\input{fig/experiment_bernoulli_kde_h0_R0/experiment_bernoulli_kde1_t5_q1_h0_R0.tex}}
\subfloat[\scriptsize{$q=2$, $t=5$}\label{fig:experiment_bernoulli_kde_t5_q2_h0_R0}]{\input{fig/experiment_bernoulli_kde_h0_R0/experiment_bernoulli_kde1_t5_q2_h0_R0.tex}}
\caption{Kernel density estimates of the solution of the Bernoulli equation for $h = 10^{-1}$ and $\kappa = 1$. Mind the different scale of the axes.}
\label{fig:experiment_bernoulli_kde_h0_R0}
\end{figure}

\begin{figure}[t!]
\centering
\subfloat[\scriptsize{$q=1$, $t=1$}\label{fig:experiment_bernoulli_kde_t1_q1_h1_R0}]{\input{fig/experiment_bernoulli_kde_h1_R0/experiment_bernoulli_kde1_t1_q1_h1_R0.tex}}
\subfloat[\scriptsize{$q=2$, $t=1$}\label{fig:experiment_bernoulli_kde_t1_q2_h1_R0}]{\input{fig/experiment_bernoulli_kde_h1_R0/experiment_bernoulli_kde1_t1_q2_h1_R0.tex}}\\
\subfloat[\scriptsize{$q=1$, $t=3$}\label{fig:experiment_bernoulli_kde_t3_q1_h1_R0}]{\input{fig/experiment_bernoulli_kde_h1_R0/experiment_bernoulli_kde1_t3_q1_h1_R0.tex}}
\subfloat[\scriptsize{$q=2$, $t=3$}\label{fig:experiment_bernoulli_kde_t3_q2_h1_R0}]{\input{fig/experiment_bernoulli_kde_h1_R0/experiment_bernoulli_kde1_t3_q2_h1_R0.tex}}\\
\subfloat[\scriptsize{$q=1$, $t=5$}\label{fig:experiment_bernoulli_kde_t5_q1_h1_R0}]{\input{fig/experiment_bernoulli_kde_h1_R0/experiment_bernoulli_kde1_t5_q1_h1_R0.tex}}
\subfloat[\scriptsize{$q=2$, $t=5$}\label{fig:experiment_bernoulli_kde_t5_q2_h1_R0}]{\input{fig/experiment_bernoulli_kde_h1_R0/experiment_bernoulli_kde1_t5_q2_h1_R0.tex}}
\caption{Kernel density estimates of the solution of the Bernoulli equation for $h = 5\cdot10^{-2}$ and $\kappa = 1$. Mind the different scale of the axes.}
\label{fig:experiment_bernoulli_kde_h1_R0}
\end{figure}

\section{Conclusion and Discussion}\label{sec:conclusion}
In this paper, we have presented a novel formulation of probabilistic numerical solution of ODEs as a standard problem in GP regression with a non-linear measurement function, and with measurements that are identically zero. The new model formulation enables the use of standard methods in signal processing to derive new solvers, such as EKF, UKF, and PF. We can also recover many of the previously proposed sequential probabilistic ODE solvers as special cases. 

Additionally, we have demonstrated excellent stability properties of the EKF and UKF on linear test equations, that is, A-stability has been established. The notion of A-stability is closely connected with the solution of stiff equations, which is typically achieved with \emph{implicit} or \emph{semi-implicit} methods \citep{Hairer1996}. In this respect our methods (EKF and UKF) most closely fit into the class of semi-implicit methods such as the methods of Rosenbrock type \citep[Chapter IV.7]{Hairer1996}. Though it does seem feasible the proposed methods can be nudged towards the class of implicit methods by means of iterative Gaussian filtering \citep{Bell1993,Garcia2015,Tronarp2018a}. 

While the notion of A-stability has been fairly successful in discerning between methods with good and bad stability properties, it is not the whole story \citep[Section 3]{Alexander1977}. This has lead to other notions of stability such as \emph{L-stability} and \emph{B-stability} \citep[Chapter IV.3 and IV.12]{Hairer1996}. It is certainly an interesting question whether the present framework allows for the development of methods satisfying these more strict notions of stability.

An advantage of our model formulation is the decoupling of the prior from the likelihood. Thus future work would involve investigating how well the exact posterior to our inference problem approximates the ODE and then analysing how well different approximate inference strategies behave. However, for $h\to 0$, we expect that the novel Gaussian filters (EKF,UKF) will exhibit polynomial worst-case convergence rates of the mean and its credible intervals, that is, its Bayesian uncertainty estimates, as has already been proved in \citep{Kersting2018} for 0-th order Taylor-series filters with arbitrary constant measurement variance $R$ (see Section \ref{sec:Taylor-Series_Methods}).

Our Bayesian recast of ODE solvers might also pave the way toward an average-case analysis of these methods, which has already been executed in \citep{Ritter2000} for the special case of Bayesian quadrature.
For the PF, a thorough convergence analysis similar to \citet{o.13:_bayes_uncer_quant_differ_equat}, \citet{conrad_probability_2017}, \citet{AbdulleGaregnani17} and \citet{del2004feynman} appears feasible.
However, the results on spline approximations for ODEs \citep[see, e.g.,][]{Loscalzo67} might also apply to the present methodology via the correspondence between GP regression and spline function approximations \citep{KimeldorfWahba70}.

\begin{acknowledgements}
This material was developed, in part, at the \textit{Prob Num 2018} workshop hosted by the Lloyd's Register Foundation programme on Data-Centric Engineering at the Alan Turing Institute, UK, and supported by the National Science Foundation, USA, under Grant DMS-1127914 to the Statistical and Applied Mathematical Sciences Institute.
Any opinions, findings, conclusions or recommendations expressed in this material are those of the author(s) and do not necessarily reflect the views of the above-named funding bodies and research institutions.
Filip Tronarp gratefully acknowledge financial support by Aalto ELEC Doctoral School. Additionally, Filip Tronarp and Simo S\"arkk\"a gratefully acknowledge financial support by Academy of Finland grant \#313708.  
Hans Kersting and Philipp Hennig gratefully acknowledge financial support by the German Federal Ministry of Education and Research through BMBF grant 01IS18052B (ADIMEM). 
Philipp Hennig also gratefully acknowledges support through ERC StG Action 757275 / PANAMA. Finally, the authors would like to thank the editor and the reviewers for their help in improving the quality of this manuscript.
\end{acknowledgements}

\appendix

\section{Proof of Proposition \ref{prop:BQ}}\label{appendix:proof1}
In this section we prove Proposition \ref{prop:BQ}. First note that, by Eq. \eqref{eq:model_assumption1}, we have
\begin{equation}\label{eq:kernel_means}
 \frac{\dif \C\big[X^{(1)}(t),X^{(2)}(s)\big]}{\dif t} = \C\big[X^{(2)}(t),X^{(2)}(s)\big],
\end{equation}
where $\C$ is the cross-covariance operator. That is the cross-covariance matrix between $X^{(1)}(t)$ and $X^{(2)}(t)$ is just the integral of the covariance matrix function of $X^{(2)}$.  Now define
\begin{subequations}
\begin{align}
(\mathbf{X}^{(i)})^\T &= \begin{bmatrix} \big(X^{(i)}_1\big)^\T & \dots & \big(X^{(i)}_N\big)^\T \end{bmatrix}, \ i=1,\dots,q+1,\\
\mathbf{g}^\T &= \begin{bmatrix} g^\T(h) & \dots & g^\T(Nh) \end{bmatrix},\\
\mathbf{z}^\T &= \begin{bmatrix} z_1^\T & \dots & z_N^\T \end{bmatrix}.
\end{align}
\end{subequations}
Since Equation \eqref{eq:ode_prior} defines a Gaussian process we have that $\mathbf{X}^{(1)}$ and $\mathbf{X}^{(2)}$ are jointly Gaussian distributed and from Eq. \eqref{eq:kernel_means} the blocks of $\C[\mathbf{X}^{(1)},\mathbf{X}^{(2)}]$ are given by
\begin{equation*}
\C\big[\mathbf{X}^{(1)},\mathbf{X}^{(2)}\big]_{n,m} =  \int_0^{nh} \C\big[X^{(2)}(t),X^{(2)}(mh)\big] \dif t
\end{equation*}
which is precisely the kernel mean, with respect to the Lebesgue measure on $[0,nh]$, evaluated at $mh$, see \cite[Section 2.2]{Briol2015probint}.
Furthermore,
\begin{equation*}
\V\big[\mathbf{X}^{(2)}\big]_{n,m} = \C\big[X^{(2)}(nh),X^{(2)}(mh)\big],
\end{equation*}
 that is, the covariance matrix function (referred to as kernel matrix in Bayesian quadrature literature \citep{Briol2015probint}) evaluated at all pairs in $\{h,\dots,Nh\}$. From Gaussian conditioning rules we have for the conditional means and covariance matrices given $\mathbf{X}^{(2)} - \mathbf{g} = 0$, denoted by $\E_{\mathcal{D}}[X^{(1)}(nh)]$ and $\V_{\mathcal{D}}[X^{(1)}(nh)]$, respectively, that
\begin{subequations}
\begin{align*}
\begin{split}
\E_{\mathcal{D}}\big[X^{(1)}(nh)\big] &= \E\big[X^{(1)}(nh)\big] + \mathbf{w}_n\big(\mathbf{z} + \mathbf{g} - \E\big[\mathbf{X}^{(2)}\big]\big)\\
 &= \E\big[X^{(1)}(nh)\big] + \mathbf{w}_n\big(\mathbf{g} - \E\big[\mathbf{X}^{(2)}\big]\big),
\end{split}\\
\V_{\mathcal{D}}\big[X^{(1)}(nh)\big] &= \V\big[X^{(1)}(nh)\big] - \mathbf{w}_n\V\big[\mathbf{X}^{(2)}\big]\mathbf{w}_n^\T,
\end{align*}
\end{subequations}
where we used the fact that $\mathbf{z} = 0$ by definition and $\mathbf{w}_n$ are the Bayesian quadrature weights associated to the integral of $g$ over the domain $[0,nh]$, given by (see \citealt[Proposition 1]{Briol2015probint})
\begin{equation*}
\mathbf{w}_n^\T = \V\big[\mathbf{X}^{(2)}\big]^{-1} \begin{bmatrix} \C\big[X^{(1)}(nh),X^{(2)}(h)\big]^\T \\ \vdots \\ \C\big[X^{(1)}(nh),X^{(2)}(Nh)\big]^\T  \end{bmatrix}.
\end{equation*}
\qed
\section{Proof of Proposition \ref{prop:kersting}}\label{appendix:proof3}
To prove Proposition \ref{prop:kersting}, expand the expressions for $S_n$ and $K_n$ as given by Eq. \eqref{eq:general_gaussian_update}:
\begin{subequations}
\begin{align*}
\begin{split}
S_n &= \dot{C} \Sigma_n^P \dot{C}^\T + \V\big[f(CX_n,t_n)\mid z_{1:n-1}\big] \\
&\quad - \dot{C}\C\big[X_n,f(CX_n,t_n)\mid z_{1:n-1}\big] \\
&\quad - \C\big[X_n,f(CX_n,t_n)\mid z_{1:n-1}\big]^\T\dot{C}^\T \\
&\approx \dot{C} \Sigma_n^P \dot{C}^\T + \V\big[f(CX_n,t_n)\mid z_{1:n-1}\big]
\end{split} \\
\begin{split}
K_n &= \big(\Sigma_n^P \dot{C}^\T - \C\big[X_n,f(CX_n,t_n)\mid z_{1:n-1}\big]\big)S_n^{-1} \\
&\approx \Sigma_n^P \dot{C}^\T S_n^{-1},
\end{split}
\end{align*}
\end{subequations}
where in the second steps the approximation $\C[X_n,f(CX_n,t_n)\mid z_{1:n-1}] \approx 0$ was used.
Lastly, recall that $z_n \triangleq 0$, hence the update equations become
\begin{subequations}
\begin{align}
S_n &\approx \dot{C} \Sigma_n^P \dot{C}^\T + \V\big[f(CX_n,t_n)\mid z_{1:n-1}\big], \\
K_n &\approx  \Sigma_n^P \dot{C}^\T S_n^{-1},\\
\mu_n^F &\approx \mu_n^P + K_n\big( \E\big[f(CX_n,t_n)\mid z_{1:n-1}\big] - \dot{C}\mu_n^P\big), \\
\Sigma_n^F &\approx \Sigma_n^P - K_n S_n K_n^\T.
\end{align}
\end{subequations}
When $\E[f(CX_n,t_n)\mid z_{1:n-1}]$ and $\V[f(CX_n,t_n)\mid z_{1:n-1}]$ are approximated by Bayesian quadrature using a squared exponential kernel and a uniform set of nodes translated and scaled by $\mu_n^P$ and $\Sigma_n^P$, respectively, the method of \citet{Kersting2016UAI} is obtained.\qed

\section{Proof of Proposition \ref{prop:s2mle_exact}}\label{appendix:proof4}
Note that $(\breve{\mu}_n^F,\breve{\Sigma}_n^F)$ is the output of a misspecified Kalman filter \cite[Algorithm 1]{Tronarp2019a}. We indicate that a quantity from Eqs. \eqref{eq:kalman_prediction} and \eqref{eq:general_gaussian_update} is computed by the misspecified Kalman filter by $\breve{}$. For example $\breve{\mu}_n^P$ is the predictive mean of the misspecified Kalman filter. If $\Sigma_n^F = \sigma^2 \breve{\Sigma}_n^F$ and $\breve{\mu}_n^F = \mu_n^F$ 
holds then for the prediction step we have
\begin{subequations}
\begin{align*}
\mu_{n+1}^P &= A(h)\mu_n^F + \xi(h) = A(h)\breve{\mu}_n^F + \xi(h) = \breve{\mu}_{n+1}^P, \\
\begin{split}
\Sigma_{n+1}^P &= A(h)\Sigma_n^F A^\T(h)  + Q(h), \\
 &= \sigma^2 \Big( A(h)\breve{\Sigma}_n^F A^\T(h)  + \breve{Q}(h) \Big), \\
&= \sigma^2 \breve{\Sigma}_{n+1}^P, 
\end{split}
\end{align*}
\end{subequations} 
where we used the fact that $Q(h) = \sigma^2 \breve{Q}(h)$, which follows from $L = \sigma \breve{L}$ and Eq. \eqref{eq:lti_disc}. Furthermore, recall that $H_{n+1} = \dot{C} - \Lambda(t_{n+1}) C$,  which for the update gives 
\begin{subequations}
\begin{align*}
\begin{split}
S_{n+1} &= H_{n+1} \Sigma_{n+1}^P H_{n+1}^\T \\
&= \sigma^2 H_{n+1} \breve{\Sigma}_{n+1}^P H_{n+1}^\T \\
&= \sigma^2 \breve{S}_{n+1}.
\end{split} \\
\begin{split}
K_{n+1} &= \Sigma_{n+1}^P H_{n+1}^\T  S_{n+1}^{-1} \\
&= \sigma^2 \breve{\Sigma}_{n+1}^PH_{n+1}^\T [\sigma^2 \breve{S}_{n+1}]^{-1} \\
&= \breve{\Sigma}_{n+1}^PH_{n+1}^\T \breve{S}_{n+1}^{-1} \\
&= \breve{K}_{n+1}. 
\end{split} \\
\begin{split}
\hat{z}_{n+1} &=  H_{n+1} \mu_{n+1}^P - \zeta(t_n) \\
&= H_{n+1} \breve{\mu}_{n+1}^P  - \zeta(t_n) \\
&= \breve{z}_{n+1},
\end{split} \\
\begin{split}
\mu_{n+1}^F &= \mu_{n+1}^P + K_{n+1}(z_{n+1} - \hat{z}_{n+1}) \\
&= \breve{\mu}_{n+1}^P + \breve{K}_{n+1}(z_{n+1} - \breve{z}_{n+1}) \\
&= \breve{\mu}_{n+1}^F. 
\end{split}\\
\begin{split}
\Sigma_{n+1}^F &= \Sigma_{n+1}^P - K_{n+1} S_{n+1} K_{n+1}^\T \\
&= \sigma^2 \Big( \breve{\Sigma}_{n+1}^P - \breve{K}_{n+1} \breve{S}_{n+1} \breve{K}_{n+1}^\T \Big) \\
&= \sigma^2\breve{\Sigma}_{n+1}^F. 
\end{split}
\end{align*}
\end{subequations}
It thus follows by induction that $\mu_n^F = \breve{\mu}_n^F$, $\Sigma_n^F = \sigma^2 \breve{\Sigma}_n^F$, $\hat{z}_n = \breve{z}_n$, and $S_n = \sigma^2 \breve{S}_n$ for $n \geq 0 $. From Eq. \eqref{eq:marginal_likelihood} 
we have that the log-likelihood is given by  
\begin{equation*}
\begin{split}
\log p(z_{1:N}) &= \log \prod_{n=1}^N \mathcal{N}(z_n; \hat{z}_n,S_n) \\
 &= \log \prod_{n=1}^N \mathcal{N}(z_n; \breve{z}_n,\sigma^2 \breve{S}_n) \\
&= -\frac{Nd}{2}\log \sigma^2 \\
&\quad -\sum_{n=1}^N \frac{(z_n-\breve{z}_n)^\T \breve{S}_n^{-1} (z_n-\breve{z}_n)}{2\sigma^2}.
\end{split}
\end{equation*}
Taking the derivative of log-likelihood with respect to $\sigma^2$ and setting it to zero gives the following estimating equation
\begin{equation*}
0 = -\frac{Nd}{2\sigma^2} + \frac{1}{2(\sigma^2)^2}\sum_{n=1}^N(z_n-\breve{z}_n)^\T \breve{S}_n^{-1} (z_n-\breve{z}_n), 
\end{equation*}
which has the following solution
\begin{equation*}
\sigma^2 = \frac{1}{Nd} \sum_{n=1}^N (z_n - \breve{z}_n)^\T \breve{S}_n^{-1} (z_n - \breve{z}_n).
\end{equation*}
\qed

\bibliographystyle{spbasic}

\end{document}